\documentclass[11pt]{amsart}
\baselineskip=7.0mm
\usepackage{amsmath}
\usepackage{graphicx}
\setlength{\baselineskip}{1.09\baselineskip}

\usepackage{bm}
\usepackage{enumerate}
\theoremstyle{plain}
\newtheorem{theorem}{Theorem}[section]
\newtheorem{lemma}[theorem]{Lemma}
\newtheorem{prop}[theorem]{Proposition}

\theoremstyle{definition}

\numberwithin{equation}{section}
%\numberwithin{equation}{subsection}

\setlength{\textwidth}{6.6in} \setlength{\textheight}{8.6in}
\hoffset=-0.83truein
\voffset=-0.1truein

%\setlength{\topmargin}{0.6in} \setlength{\oddsidemargin}{0.3in}
%\setlength{\evensidemargin}{0.3in} \setlength{\textwidth}{6.0in}
%\setlength{\rightmargin}{0.7in} \setlength{\leftmargin}{-0.5in}
%\setlength{\textheight}{7.9in}

 %volume form of \bg%
  %volume form of  g%
  %barred Gamma- Christoffel symbol for barred g%
 %Christoffel symbol for lorentzian 4-metric%
 %barred nabla - covariant derivative w.r.t. barred g%
 %tilde nabla - covariant derivative w.r.t. tilde{g}-4metric%
  %barred g%
 %tilde g - Lorentzian 4-metric%
  %barred Y%
  % Y^\phi%
 %barred w- spacelike component of normal vector to t=f%
 %barred k - second fundamental form for t=0%
 %barred divergence - divergence with respect to barred g%
 %barred rho in Brill's coordinate%
 %barred rho in Brill's%
 %barred theta in Brill's%
 %barred U in Brill's coordinate %
 %barred z in Brill's coordinate%
 %barred alpha in Brill's coordinate%
\newcommand{\R}{\mathbb R}
\renewcommand{\H}{\mathbb H}
\newcommand{\C}{\mathbb C}

\newcommand{\tr}{\operatorname{Tr}}

\begin{document}

%\title[] {}

%\begin{abstract}

%\end{abstract}
%\maketitle

\title[The area-angular momentum-charge inequality]{The area-angular momentum-charge inequality for black holes with positive cosmological constant}

\author{Edward T. Bryden}
\author{Marcus A. Khuri}
\address{Department of Mathematics\\
Stony Brook University\\
Stony Brook, NY 11794, USA}
\email{khuri@math.sunysb.edu, ebryden@math.sunysb.edu}

%\date{\today}

\thanks{M. Khuri acknowledges the support of NSF Grant DMS-1308753.}

\begin{abstract}
We establish the conjectured area-angular momentum-charge inequality for stable apparent horizons
in the presence of a positive cosmological constant, and show that it is saturated precisely for extreme Kerr-Newman-de Sitter horizons. As with previous inequalities of this type, the proof is reduced to minimizing an `area functional' related to a harmonic map energy; in this case maps are from the 2-sphere to the complex hyperbolic plane. The proof here is simplified compared to previous results for less embellished inequalities, due to the observation that the functional is convex along geodesic deformations in the target.
\end{abstract}
\maketitle

\section{Introduction}
\label{sec1} \setcounter{equation}{0}
\setcounter{section}{1}

Motivated in part by black hole thermodynamics, in particular the desire for a nonnegative black hole temperature \cite{Dain}, several inequalities relating the area, angular-momentum, and charge of horizons have been established \cite{HennigAnsorgCederbaum,HennigCederbaumAnsorg,AnsorgHennigCederbaum,AnsorgPfister,Clement,
ClementJaramillo,ClementJaramilloReiris,ClementReirisSimon,JaramilloReirisDain,DainJaramilloReiris,
DainReiris,FajmanSimon,Hollands,Jaramillo}. Inequalities elucidating how a cosmological constant $\Lambda$ constrains these quantities, have also been proved \cite{HaywardShiromizuNakao,Simon}. The most recent in this direction is the result of
Clement, Reiris, and Simon \cite{ClementReirisSimon} who have treated the area-angular momentum inequality with $\Lambda>0$ for axisymmetric stable apparent horizons
\begin{equation}\label{1}
|\mathcal{J}|\leq\frac{A}{8\pi}\sqrt{\left(1-\frac{\Lambda A}{4\pi}\right)\left(1-\frac{\Lambda A}{12\pi}\right)},
\end{equation}
and showed that it is saturated precisely for extreme Kerr-de Sitter black holes. The purpose of the present work is to obtain the most general form of this inequality by including charge, as well as to establish the corresponding rigidity result for extreme Kerr-Newman-de Sitter (KNdS)
horizons.

We take an initial data point of view. Recall that an initial data set $(M, g, k, E, B)$ for the Einstein-Maxwell equations consists of a 3-manifold $M$, Riemannian metric $g$, symmetric 2-tensor $k$ representing
extrinsic curvature, and vector fields $E$ and $B$ which constitute the electromagnetic field.
Let $\mu$ and $J$
be the energy and momentum densities of the matter fields, then the constraint equations are given by
\begin{equation}\label{2}
16\pi\mu = R+(\tr_{g}k)^{2}-|k|^{2}-2\Lambda,\quad\quad\quad
8\pi J = \operatorname{div}_{g}(k-(\tr_{g}k)g),
\end{equation}
where $R$ denotes scalar curvature.
When contributions from the electromagnetic field are removed we have
\begin{equation}\label{2}
\mu_{EM} = \mu-\frac{1}{8\pi}(|E|^2+|B|^2),\quad\quad\quad
J_{EM} = J+\frac{1}{4\pi}E\times B,
\end{equation}
where $(E\times B)_{i}=\epsilon_{ijl}E^{j}B^{l}$ is the cross product with
$\epsilon$ the volume form of $g$. The typical energy condition employed for geometric inequalities associated with such initial data is referred to as the charged dominant energy condition
\begin{equation}\label{2.1}
\mu_{EM}\geq|J_{EM}|.
\end{equation}

Consider a closed 2-dimensional surface $S$ embedded in the initial data, with induced metric $\gamma$ and unit normal $n$ pointing inside $M$.
We say that the surface is axially symmetric if the group of isometries of the Riemannian submanifold $(S,\gamma)$ has a subgroup isomorphic to $U(1)$, and \eqref{3} holds. Let $\eta$ denote the Killing field associated with this
symmetry. It will be assumed that the integral curves of $\eta$ are normalized to have an affine length of $2\pi$. Moreover we require that
\begin{equation}\label{3}
\mathfrak{L}_{\eta}\gamma=\mathfrak{L}_{\eta}k(n,\eta)
=\mathfrak{L}_{\eta}E(n)=\mathfrak{L}_{\eta}B(n)=0,
\end{equation}
where $\mathfrak{L}_{\eta}$ is Lie differentiation.
Axisymmetry allows for a canonical expression \cite{AshtekarBeetleLewandowski,BoothFairhurst,DainKhuriWeinsteinYamada} for the angular momentum associated with the surface $S$, namely
\begin{equation}\label{4}
\mathcal{J}=\frac{1}{8\pi}\int_{S}\left(k(n,\eta)+\psi E(n)-\chi B(n)\right)dA_{\gamma}
\end{equation}
where $\chi$ and $\psi$ are potentials for the electric and magnetic field, respectively, to be defined in the next section.
The first term in the integral is the standard expression arising from the Komar angular momentum, and the remaining parts are included so as to achieve conservation of angular momentum in the Einstein-Maxwell context. In particular, if the full initial data set is axisymmetric and there is no charged matter as well as no nonelectromagnetic momentum density in the Killing direction, then the angular momentum \eqref{4} does not vary \cite{DainKhuriWeinsteinYamada} among surfaces which are homologous to one another. Furthermore the electric and magnetic charge of the surface are given by
\begin{equation}\label{5}
Q_{e}=\frac{1}{4\pi}\int_{S}E(n)dA_{\gamma},\quad\quad
Q_{b}=\frac{1}{4\pi}\int_{S}B(n)dA_{\gamma},
\end{equation}
and the square of the total charge is $Q^2=Q_e^2+Q_b^2$.

Recall that the strength of the gravitational field near the surface $S$ may be measured by the null expansions
\begin{equation}\label{6}
\theta_{\pm}:= H_{S} \pm \operatorname{Tr}_{S}k,
\end{equation}
where $H_{S}$ is the mean curvature with respect to the unit outward normal $n$. The null expansions
measure the rate of change of area for a shell of light emitted by the surface in
the outward future direction ($\theta_{+}$), and outward past direction ($\theta_{-}$). Thus the gravitational
field is interpreted as being strong near $S$ if $\theta_{+} < 0$ or $\theta_{-}< 0$, in which case $S$ is referred to
as a future (past) trapped surface. Future (past) apparent horizons arise as boundaries of
future (past) trapped regions and satisfy the equation $\theta_{+} = 0$ ($\theta_{-}=0$). Apparent horizons
may be thought of as quasi-local notions of event horizons, and in fact, assuming cosmic
censorship, they must generically be contained inside black holes \cite{Wald}.

In analogy with minimal surfaces, apparent horizons come with a notion of stability. In order to define this in the current setting, consider normal variations of the (future) apparent horizon $S$ with variational vector field $\partial_{t}=\varphi n$ where $\varphi\in C^{\infty}(S)$. Then a computation \cite{AnderssonMarsSimon} shows that
\begin{equation}\label{7}
\partial_{t}\theta_{+}|_{t=0}=L\varphi:=-\Delta_{\gamma}\varphi+2\langle X,\nabla \varphi\rangle
+(W+\operatorname{div}_{\gamma}X-|X|^2)\varphi,
\end{equation}
where
\begin{equation}\label{8}
W=K-8\pi(\mu+J(n))-\Lambda-\frac{1}{2}|II|^2,\quad\quad\quad X=k(n,\cdot),
\end{equation}
with $K$ the Gauss curvature of $\gamma$ and $II_{ij}=h_{ij}+k_{ij}$ the null second fundamental form associated with $\theta_{+}$; here $h$ is the second fundamental form of $S\subset M$. Although $L$ is not necessarily self adjoint, it has a real principal eigenvalue
$\lambda_1$ and corresponding positive unique (up to scaling) principal eigenfunction $\varphi_1$. The future apparent horizon $S$ is referred to as stable if $\lambda_1\geq 0$. A similar statement holds for past apparent horizons. We remark that, according to \cite[Section 5]{AnderssonMarsSimon}, this notion of stability is consistent with that which is used in \cite{ClementReirisSimon} to establish \eqref{1}. Our main result is as follows.

\begin{theorem}\label{thm1}
Let $(M,g,k,E,B)$ be an initial data set for the Einstein-Maxwell equations with positive cosmological constant $\Lambda>0$, and let $S\subset M$ be an axisymmetric stable apparent horizon on which the charged dominant energy condition \eqref{2.1} holds. Then
\begin{equation}\label{9}
\mathcal{J}^2+\frac{Q^2}{4}\leq\frac{A^2}{64\pi^2}\left[\left(1-\frac{\Lambda A}{4\pi}\right)\left(1-\frac{\Lambda A}{12\pi}\right)-\frac{2\Lambda Q^2}{3}\right],
\end{equation}
and equality is achieved if and only if $(S,\gamma,k(n,\cdot),E,B)$ arises from an extreme
Kerr-Newman-de Sitter horizon.
\end{theorem}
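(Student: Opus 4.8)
The plan is to carry out the familiar stability reduction, recognize the resulting lower bound as a harmonic-map energy into a nonpositively curved target, and then exploit geodesic convexity to identify the minimizer and extract the sharp constant.

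First I would turn the stability hypothesis into an integral inequality. Since $\lambda_1\geq 0$ with positive principal eigenfunction $\varphi_1$, dividing $L\varphi_1=\lambda_1\varphi_1\geq 0$ (see \eqref{7}) by $\varphi_1$ and setting $u=\log\varphi_1$ gives a pointwise inequality; testing it against $\alpha^2$ for an arbitrary axisymmetric function $\alpha$, integrating by parts, and completing the square in the combination $\nabla u-X$ eliminates the first-order and $|X|^2$ terms to yield $\int_S(|\nabla\alpha|^2+W\alpha^2)\,dA_{\gamma}\geq 0$. Taking $\alpha\equiv 1$, inserting $W$ from \eqref{8}, and using Gauss--Bonnet ($\int_S K\,dA_{\gamma}=4\pi$, as $S$ is a topological sphere) produces the budget inequality
\[
4\pi-\Lambda A\geq\int_S\Big(8\pi(\mu+J(n))+\tfrac12|II|^2\Big)\,dA_{\gamma}.
\]

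Next I would estimate the right-hand integrand pointwise from below. Writing the axisymmetric metric on $S$ in the standard form adapted to $\eta$, I would introduce the angular-momentum twist potential (whose gradient encodes $k(n,\eta)$ through the momentum constraint in \eqref{2}) together with the electric and magnetic potentials $\chi,\psi$ appearing in \eqref{4}. Combining the charged dominant energy condition \eqref{2.1} with the Hamiltonian and momentum constraints then bounds $8\pi(\mu+J(n))+\tfrac12|II|^2$ below by a sum of squared gradients of these potentials and of the metric coefficient, weighted by powers of $|\eta|$. Integrating over the $\phi$--circle converts the budget inequality into $4\pi-\Lambda A\geq\mathcal{M}$, where $\mathcal{M}$ is a harmonic-map energy (together with a $\Lambda$-dependent potential) for maps from $S^2$ into the complex hyperbolic plane $\mathbb{CH}^2$, and the boundary/topological data of the competitor maps are fixed by $\mathcal{J}$, $Q$ and $A$ via \eqref{4}--\eqref{5}.

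The crux is the minimization of $\mathcal{M}$ with these data held fixed. Since $\mathbb{CH}^2$ is a Hadamard manifold, the harmonic-map energy is convex along geodesic homotopies in the target; the key observation is that the cosmological potential term is convex along the very same deformations, so $\mathcal{M}$ is geodesically convex in its entirety. Convexity forces any critical point to be the unique global minimizer, and I would check directly that the Euler--Lagrange system is solved exactly by the reduced data of the extreme Kerr--Newman--de Sitter horizon; evaluating $\mathcal{M}$ there gives a closed-form value which, upon substitution into the budget inequality and rearrangement, is precisely \eqref{9}. For rigidity, equality throughout forces $\lambda_1=0$ with $\alpha\equiv 1$ admissible (so the completed square vanishes), saturation of \eqref{2.1} and of every pointwise algebraic estimate, and coincidence of the map with the minimizer, which together characterize the extreme Kerr--Newman--de Sitter horizon. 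I expect the main obstacle to be twofold: establishing geodesic convexity of the full functional including the cosmological term, and carrying out the explicit evaluation at the extreme KNdS map, where the delicate point, absent when $\Lambda=0$, is that the area $A$ enters both sides of the budget inequality and must be tracked consistently through the minimization.
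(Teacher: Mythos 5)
Your overall architecture (stability reduction, an area functional related to a harmonic map energy into $\mathbb{H}^2_{\mathbb{C}}$, geodesic convexity, extreme KNdS as the unique minimizer) is indeed the paper's, and you correctly flag the two genuine difficulties. However, the reduction step as you describe it would fail. If you complete the square fully in $\nabla\log\varphi_1-X$ and drop the resulting nonpositive term, you obtain only $\int_S(|\nabla\alpha|^2+W\alpha^2)\,dA_\gamma\geq 0$, which has lost the vector field $X=k(n,\cdot)$ entirely --- and $X(\eta)=k(n,\eta)$ is precisely the angular momentum density. It cannot be recovered afterwards from a pointwise lower bound on $8\pi(\mu+J(n))+\tfrac12|II|^2$: the null second fundamental form $II$ is purely tangential to $S$ and so does not contain $k(n,\eta)$, while the charged dominant energy condition \eqref{2.1} only yields $\mu+J(n)\geq\tfrac{1}{8\pi}\left(E(n)^2+B(n)^2\right)$, i.e.\ the charge terms. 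The paper instead completes the square only in the meridian direction $e_2$ and uses axisymmetry of the test function together with an integration by parts to annihilate the cross term $\int_S X(e_3)e_3(\log\varphi_1)v^2$, so that $X(e_3)^2v^2=k(n,e_3)^2v^2$ survives on the favorable side of the inequality \eqref{32}; this is the only place where $\mathcal{J}$ enters the functional. Relatedly, taking $\alpha\equiv 1$ cannot produce the sharp constant: the weight $\zeta_a=1+\tfrac{a^2\Lambda}{3}\cos^2\theta$, the $\sigma'^2$ Dirichlet energy, and the area term $4\pi c\,\alpha_a$ all arise from the specific test function $v=\sqrt{\zeta_a}\,e^{-\sigma/2}$ inserted into $\int_S(|\nabla v|^2+Kv^2)$; Gauss--Bonnet alone discards all of this and would at best recover the area--charge inequality \eqref{10}.

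The second gap is that you observe the area enters both sides but do not supply the mechanism that closes the argument. Because $\mathcal{I}_a$ contains the term $\Lambda(A/4\pi)^2e^{-\sigma}$, minimizing over maps with fixed $(\mathcal{J},Q)$ but arbitrary $A$ need not single out an extreme KNdS configuration at all. The paper's resolution is the scaling argument of Lemma \ref{scaling}: replace $(A,\mathcal{J},Q)$ by the unique triple $(\hat A,\hat{\mathcal{J}},\hat Q)$ on the same scaling orbit that saturates \eqref{9}, rescale the map to $\hat\Psi$ as in \eqref{52}, and show that \eqref{9} is equivalent to $\hat A\geq A$, which then follows from $\mathcal{I}_{\hat a}(\hat\Psi)\geq\mathcal{I}_{\hat a}(\Psi_0)$ via \eqref{54} and \eqref{56}. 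Without this step the additive ``budget inequality'' does not rearrange into \eqref{9}. Your convexity discussion, including the need to verify that the cosmological term $e^{2u}$ is convex along target geodesics, does match the paper's Proposition \ref{prop1}, and your rigidity sketch is consistent with the paper's, though the paper additionally needs the Poincar\'e gap bound of Theorem \ref{thm2} and the vanishing of $\operatorname{dist}_{\mathbb{H}^2_{\mathbb{C}}}(\tilde\Psi,\tilde\Psi_0)$ on the axis to force $\Psi=\Psi_0$.
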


The inequality \eqref{9} may be derived by requiring nonnegativity of the temperature for KNdS black holes (see\cite{CaldarelliCognolaKlemm}), and was conjectured to hold under the above hypotheses in \cite{ClementReirisSimon}. It can be interpreted as yielding, for a black hole of fixed area, an upper bound on the amount of angular momentum and charge that it may contain; this gives a variational characterization of the extreme KNdS configuration as the unique horizon which optimizes this bound. Theorem \ref{thm1} implies previously established inequalities \cite{HaywardShiromizuNakao,Simon} giving variational characterizations of extreme Schwarzschild-dS and extreme Reissner-Nordstr\"{o}m-dS respectively,
\begin{equation}\label{10}
A\leq\frac{4\pi}{\Lambda},\quad\quad\quad \Lambda A^2-4\pi A+16\pi^2 Q^2\leq 0,
\end{equation}
in addition to \eqref{1} associated with extreme Kerr-dS. These results have been used to show how the cosmological constant constrains the amount of angular momentum and charge within a black hole, for instance they naturally imply the universal bounds
\begin{equation}\label{10.1}
\mathcal{J}^2\leq \frac{\sqrt{3}(2-\sqrt{3})}{16\Lambda^2},\quad\quad\quad
Q^2\leq\frac{1}{4\Lambda}.
\end{equation}
Inequality \eqref{9} further improves such bounds. For instance, by maximizing the right-hand side of \eqref{9} over $A$ and performing some algebra we arrive at
\begin{equation}\label{11}
\frac{\mathcal{J}^2}{(3+4\Lambda Q^2)^{3/2}}+\frac{(3+4\Lambda Q^2)^{1/2}}{48\Lambda^2}
\leq\frac{1}{24\Lambda},
\end{equation}
which reduces to the bounds in \eqref{10.1} by setting either $\mathcal{J}=0$ or $Q=0$.

There is a standard approach to proving area inequalities for stable horizons. Namely, from  stability one may derive a lower bound for the area in terms of an `area functional' related to a harmonic map energy, and the desired inequality arises by minimizing this functional and showing that the infimum is achieved precisely for the relevant extreme stationary vacuum configuration. The proof of Theorem \ref{thm1} follows this basic prescription, with the added difficulty that the area functional also depends on the area as a consequence of having a nonzero cosmological constant. This is problematic in that the area functional is no longer simply a regularized version of a harmonic map energy. In \cite{ClementReirisSimon} this issues was resolved through a clever scaling argument, and the same type of strategy works here as well. Our main contribution with regards to the proof of these type of inequalities is to observe that the minimization procedure may be simplified, and also enhanced by providing a gap lower bound. This is achieved by observing that the area functional is convex along geodesic deformations of the functional within the target symmetric space, which in the current context will be the complex hyperbolic plane $\mathbb{H}_{\mathbb{C}}^2$. Thus, one immediately achieves a unique minimizer through elementary means. This type of argument is motivated by the work of Schoen and Zhou \cite{SchoenZhou} on the mass-angular momentum-charge inequalities.

This paper is organized as follows. In the next section we describe the construction of potentials associated with angular momentum and charge. These are then used in Section \ref{sec3} together with the stability property to derive the appropriate area functional. In Section \ref{sec4} we study the rescaled area functional and show that it possesses a unique minimizer for fixed angular momentum and charge, namely the extreme KNdS horizon. Lastly, various formulas and aspects of the Kerr-Newman-de Sitter black holes are described in the appendix, along with a proof of the existence of canonical coordinates used for axisymmetric geometries on a sphere.

\section{Construction of Potentials}
\label{sec2} \setcounter{equation}{0}
\setcounter{section}{2}

In this section we will derive the expression for potentials associated with electric and magnetic charge, as well as angular momentum. Our approach will be to motivate this construction on the horizon $S$, as the restriction of potentials naturally defined in the bulk $M$ which arise from the study of a related geometric inequality, namely the mass-angular momentum-charge inequality \cite{KhuriWeinstein}. For this it will be necessary to place added restrictions on the initial data, which are ultimately not necessary for the existence of potentials on $S$ (or the validity of Theorem \ref{thm1}), but serve the purpose of allowing the following motivational discussion.
Thus, for the time being we will assume that the axisymmetry of $S$ extends to axisymmetry of $M$, that $M$ is simply connected,
and that there is no charged matter or momentum density in the Killing direction:
\begin{equation}\label{12}
\mathfrak{L}_{\eta}g=\mathfrak{L}_{\eta}k
=\mathfrak{L}_{\eta}E=\mathfrak{L}_{\eta}B=0,\quad\quad\quad
\operatorname{div}_{g}E=\operatorname{div}_{g}B=J_{EM}(\eta)=0.
\end{equation}
Under these conditions it is straightforward to obtain a potential for the electric field on the bulk. To see this observe that
\begin{equation}\label{13}
d(\iota_{\eta}\star E)=\mathfrak{L}_{\eta}\star E-\iota_{\eta} d\star E=0,
\end{equation}
where $\iota$ denotes interior product and $\star$ is the Hodge star. It follows from simple connectivity that there is an electric potential satisfying
\begin{equation}\label{14}
d\bar{\chi}=\iota_{\eta}\star E=\eta^{i} E^{l}\varepsilon_{ijl}dx^{j}.
\end{equation}
Since exterior derivatives commute with pullback, we may restrict this equation to $S$ to find
\begin{equation}\label{15}
d\chi=E(n)\iota_{\eta}\varepsilon^{(2)},
\end{equation}
where $\varepsilon^{(2)}$ is the volume form of $\gamma$ and $\chi=i^{*}\bar{\chi}$ with $i:S\hookrightarrow M$ the inclusion map. Now note that equation \eqref{15} has a solution $\chi\in C^{\infty}(S)$ independent of any hypotheses on $M$, since stability implies that the apparent horizon $S$ is topologically a 2-sphere and hence simply connected. The desired electric potential $\chi$ is then defined as a solution to \eqref{15}. Similarly, we define the magnetic potential to be a solution of the equation
\begin{equation}\label{16}
d\psi=B(n)\iota_{\eta}\varepsilon^{(2)}.
\end{equation}
Note that both $\chi$ and $\psi$ are axisymmetric, as it is clear that $\iota_{\eta}d\chi=\iota_{\eta}d\psi=0$.

In order to construct the angular momentum potential, let $p=k-(\tr_{g}k)g$ be the momentum tensor with associated 1-form
\begin{equation}\label{17}
\mathcal{P}=\star(p(\eta)\wedge\eta)=\iota_{\eta}\star p(\eta).
\end{equation}
Then
\begin{equation}\label{18}
d\mathcal{P}=\mathfrak{L}_{\eta}\star p(\eta)-\iota_{\eta}d\star p(\eta)
=-\iota_{\eta}\star\left[\star d\star p(\eta)\right]
=\iota_{\eta}\star 8\pi J(\eta)=\iota_{\eta}\star\left[8\pi J_{EM}(\eta)-2E\times B(\eta)\right].
\end{equation}
Since $E\times B=\star (E\wedge B)$ and
\begin{equation}\label{19}
\iota_{\eta}\star\left[\iota_{\eta}\star(E\wedge B)\right]
=\iota_{\eta}\star d\bar{\psi}(E)
=\iota_{\eta}\left(\star E\wedge d\bar{\psi}\right)
=(\iota_{\eta}\star E)\wedge d\bar{\psi}=\frac{1}{2}d\left(\bar{\chi}\wedge d\bar{\psi}
-d\bar{\chi}\wedge \bar{\psi}\right),
\end{equation}
we have
\begin{equation}\label{20}
d\left(\mathcal{P}-\bar{\chi} d\bar{\psi}+\bar{\psi} d\bar{\chi}\right)=8\pi \iota_{\eta}\star J_{EM}(\eta)=0.
\end{equation}
It follows that there exists a `charged twist potential' such that
\begin{equation}\label{21}
d\bar{\omega}=\mathcal{P}-\bar{\chi} d\bar{\psi}+\bar{\psi} d\bar{\chi}.
\end{equation}
In analogy with the electromagnetic potentials, we may restrict this equation to $S$ and set $\omega=i^{*}\bar{\omega}$ to find
\begin{equation}\label{22}
d\omega=k(n,\eta)\iota_{\eta}\varepsilon^{(2)}-\chi d\psi+\psi d\chi.
\end{equation}
As above, this equation has a solution $\omega\in C^{\infty}(S)$ independent of any hypotheses on $M$, and thus the desired charged twist potential $\omega$ is then defined to be a solution of \eqref{22}. Note that $\iota_{\eta}d\omega=0$ so that this potential is also axisymmetric.

We now record how the potentials just constructed encode the angular momentum \eqref{4} and charge \eqref{5} of $S$. In what follows, as well as in the remaining sections, we will make use of a convenient coordinate system on $S$. By virtue of the fact that $S$ is axisymmetric and topologically a 2-sphere, there exists a global set of polar coordinates $(\theta,\phi)$, with $\theta\in[0,\pi]$ and $\phi\in[0,2\pi)$, such that $\eta=\partial_{\phi}$ and the metric takes the form
\begin{equation}\label{23}
\gamma=e^{2c-\sigma}d\theta^2+e^{\sigma}\sin^{2}\theta d\phi^2,
\end{equation}
where $\sigma\in C^{\infty}(S)$ depends only on $\theta$ and $c$ is a constant related to the area by $A=4\pi e^{c}$. Note that in order to avoid conical singularities at the two axis points $\Gamma=\{\theta=0,\pi\}$, we must have
\begin{equation}\label{23.1}
1=\lim_{\theta\rightarrow 0}\frac{2\pi\cdot\mathrm{Radius}}{\mathrm{Circumference}}
=\lim_{\theta\rightarrow 0}\frac{\int_{0}^{\theta}e^{c-\sigma/2}d\theta}{e^{\sigma/2}\sin\theta}
=e^{c-\sigma(0)}
\end{equation}
and a similar expression at $\theta=\pi$, so that $\sigma(0)=\sigma(\pi)=c$.
Although arguments for the existence of such a coordinate system have been given previously \cite{AshtekarEnglePawlowskiBroeck,DainReiris}, we provide a detailed proof in Appendix \ref{sec6} which is appropriate for the current setting. In these coordinates the electromagnetic potentials are given by
\begin{equation}\label{24}
\chi'=E(n)e^c \sin\theta,\quad\quad\quad \psi'=B(n)e^c \sin\theta,
\end{equation}
where the prime represents $\frac{d}{d\theta}$. Hence
\begin{equation}\label{25}
Q_e=\frac{1}{4\pi}\int_{S}E(n)dA_{\gamma}=\frac{1}{4\pi}\int_{S}\chi' d\theta\wedge d\phi
=\frac{\chi(\pi)-\chi(0)}{2},
\end{equation}
and similarly
\begin{equation}\label{26}
Q_b=\frac{\psi(\pi)-\psi(0)}{2}.
\end{equation}
As for the angular momentum, we find that the charged twist potential is given in coordinates by
\begin{equation}\label{27}
\omega'=k(n,\eta)e^c \sin\theta-\chi\psi'+\psi\chi',
\end{equation}
and therefore
\begin{equation}\label{28}
\mathcal{J}=\frac{1}{8\pi}\int_{S}\left(k(n,\eta)+\psi E(n)-\chi B(n)\right)dA_{\gamma}
=\frac{1}{8\pi}\int_{S}\omega'd\theta\wedge d\phi
=\frac{\omega(\pi)-\omega(0)}{4}.
\end{equation}

\section{The Area Functional}
\label{sec3} \setcounter{equation}{0}
\setcounter{section}{3}

In this section the horizon stability condition will be used to derive a lower bound for the area in terms of a set of quantities related to harmonic maps from $\mathbb{S}^2\rightarrow\mathbb{H}_{\mathbb{C}}^2$. From now on it will be assumed that the surface $S$ is a stable future apparent horizon; similar arguments hold if $S$ is a stable past apparent horizon. Stability asserts that the principal eigenvalue of the stability operator \eqref{7} is nonnegative. Therefore, if $\varphi_1$ denotes the positive principal eigenfunction then for any test function $v\in C^{\infty}(S)$ we have
\begin{align}\label{29}
\begin{split}
0\leq &\int_{S}v^{2}\varphi_{1}^{-1}L\varphi_{1}\\
=&\int_{S}\nabla\varphi_{1}\cdot\nabla(\varphi_{1}^{-1}v^2)
+2\left(X\cdot\nabla\varphi_{1}\right)\varphi_{1}^{-1}v^2
+\left(W+\operatorname{div}_{\gamma}X-|X|^2\right)v^2\\
=&\int_{S}-\left(|\nabla\log\varphi_{1}|^{2}-2X\cdot\nabla\log\varphi_1
+|X|^2 \right)v^2+2v\left(\nabla\log\varphi_{1}-X\right)\cdot\nabla v+Wv^2.
\end{split}
\end{align}
Let
\begin{equation}\label{30}
e_1=n,\quad\quad e_2=e^{\sigma/2-c}\partial_{\theta},\quad\quad
e_3=\frac{1}{e^{\sigma/2}\sin\theta}\partial_{\phi},
\end{equation}
be an orthonormal frame on $S$, and consider an axisymmetric test function so that $e_{3}(v)=0$.
This, together with $\mathfrak{L}_{\eta}k(n,\eta)=0$ from \eqref{3}, and an integration by parts shows that
\begin{equation}\label{31}
\int_{S}X(e_{3})e_{3}(\log\varphi_{1})v^2=0.
\end{equation}
Then \eqref{29} implies that
\begin{equation}\label{32}
0\leq\int_{S}|\nabla v|^2+Wv^2 -X(e_3)^2 v^2-|e_{3}(\log\varphi_{1})|^2v^2-\left|(e_{2}(\log\varphi_1)-X(e_2))v-e_{2}(v)\right|^2.
\end{equation}

\begin{lemma}\label{stability}
Under the hypotheses of Theorem \ref{thm1}, for any axisymmetric $v\in C^{\infty}(S)$ the following stability inequality holds
\begin{equation}\label{33}
\int_{S}\left(|\nabla v|^2+Kv^2\right)dA_{\gamma}
\geq\int_{S}\left(k(n,e_{3})^2+E(n)^2+B(n)^2+\Lambda\right)v^2dA_{\gamma}.
\end{equation}
\end{lemma}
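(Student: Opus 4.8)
The plan is to start from the stability inequality already reduced to the form \eqref{32} and extract \eqref{33} in two stages: first discard the manifestly nonpositive terms to isolate a clean inequality, and then absorb the matter contribution hidden in $W$ using the charged dominant energy condition. The first stage is purely formal. Noting that $X(e_3)=k(n,e_3)$, the last two terms of \eqref{32} are negatives of squares and hence nonpositive, so dropping them (this is where the $e_2$-derivative and the $e_3$ log-derivative information is thrown away) yields
\begin{equation*}
\int_S k(n,e_3)^2\, v^2 \leq \int_S \left(|\nabla v|^2 + W v^2\right).
\end{equation*}
Substituting the definition \eqref{8} of $W$ and rearranging gives
\begin{equation*}
\int_S \left(|\nabla v|^2 + K v^2\right) dA_\gamma \geq \int_S \left[ k(n,e_3)^2 + \Lambda + 8\pi\big(\mu + J(n)\big) + \tfrac{1}{2}|II|^2 \right] v^2\, dA_\gamma,
\end{equation*}
so the entire lemma reduces to the pointwise inequality
\begin{equation*}
8\pi\big(\mu + J(n)\big) + \tfrac{1}{2}|II|^2 \geq E(n)^2 + B(n)^2.
\end{equation*}

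The substance of the argument, and where I expect the real work to lie, is this pointwise electromagnetic estimate. Here I would feed in the definitions \eqref{2} of $\mu_{EM}$ and $J_{EM}$, which express the genuine matter densities in terms of $\mu$, $J$, and the electromagnetic field; after contracting $J$ with $n$ and using $(E\times B)(n)=\langle E\times B,n\rangle$, this produces
\begin{equation*}
8\pi\big(\mu + J(n)\big) = 8\pi\big(\mu_{EM} + J_{EM}(n)\big) + |E|^2 + |B|^2 - 2(E\times B)(n).
\end{equation*}
Since $n$ is a unit vector, $|J_{EM}(n)|\leq |J_{EM}|$, so the charged dominant energy condition \eqref{2.1} forces $\mu_{EM}+J_{EM}(n)\geq 0$ and the first bracket is nonnegative. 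It then remains to control the electromagnetic bulk. Decomposing $E=E(n)n+E^{T}$ and $B=B(n)n+B^{T}$ into normal and tangential parts, one has $|E|^2=E(n)^2+|E^T|^2$ and likewise for $B$, while the triple product reduces to $(E\times B)(n)=\epsilon(n,E^T,B^T)$, whose magnitude is bounded by $|E^T||B^T|$ (it is the area form on the tangent plane evaluated on $E^T,B^T$). Hence
\begin{equation*}
|E|^2 + |B|^2 - 2(E\times B)(n) \geq E(n)^2 + B(n)^2 + \left(|E^T| - |B^T|\right)^2 \geq E(n)^2 + B(n)^2,
\end{equation*}
and discarding the nonnegative term $\tfrac12|II|^2$ completes the pointwise inequality. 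Integrating against $v^2$ then delivers \eqref{33}.

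The only point requiring care is the sign bookkeeping in the last estimate: the cross term $-2(E\times B)(n)$ enters $8\pi J(n)$ with a definite sign, and one must check that it combines with the tangential norms $|E^T|^2+|B^T|^2$ to form the perfect square $(|E^T|-|B^T|)^2$ rather than to degrade the bound — so that the tangential pieces cancel exactly and only the normal components $E(n)^2+B(n)^2$ survive. Everything upstream, namely the interior-product and integration-by-parts manipulations and the orthonormal-frame computation \eqref{30} leading to \eqref{32}, is already in hand from the earlier part of the section, so no further analytic input is needed.
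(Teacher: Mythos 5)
Your proposal is correct and follows essentially the same route as the paper: drop the two nonpositive square terms in \eqref{32}, substitute the definition of $W$, and reduce everything to the pointwise estimate $\mu+J(n)\geq\frac{1}{8\pi}\left(E(n)^2+B(n)^2\right)$, which the paper obtains exactly as you do via the charged dominant energy condition and the normal/tangential decomposition of $E$ and $B$. The paper merely states this last estimate in one line (equation \eqref{34}), whereas you spell out the perfect-square cancellation $\left(|E^T|-|B^T|\right)^2\geq 0$ explicitly.
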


\begin{proof}
In light of \eqref{32}, the desired inequality follows from the charged dominant energy condition \eqref{2.1} and the computation
\begin{equation}\label{34}
\mu+J(n)=\mu_{EM}+J_{EM}(n)+\frac{1}{8\pi}(|E|^2+|B|^2)-\frac{1}{4\pi}E\times B(n)
\geq\frac{1}{8\pi}\left(E(n)^2+B(n)^2\right).
\end{equation}
\end{proof}

In order to choose an appropriate test function $v$, we rely on the intuition that the stability inequality \eqref{33} should be saturated for the extreme KNdS black hole. By analyzing the second variation of area one may verify that this is indeed the case for a test function of the form below. We then choose
\begin{equation}\label{35}
v=\sqrt{\zeta_{a}}e^{-\sigma/2},\quad\quad\quad\zeta_{a}=1+\frac{a^{2}\Lambda}{3}\cos^{2}\theta,
\end{equation}
where $a$ is a constant to be specified. Each term of \eqref{33} will be computed separately. Observe that \eqref{24} and \eqref{27} yield
\begin{equation}\label{36}
\int_{S}\left(E(n)^2+B(n)^2\right)v^2 dA_{\gamma}
=\int_{\mathbb{S}^2}e^{-c}\zeta_{a}\frac{e^{-\sigma}}{\sin^2\theta}\left(\chi'^2+\psi'^2\right)dA,
\end{equation}
and
\begin{equation}\label{37}
\int_{S}k(n,e_{3})^2v^2 dA_{\gamma}
=\int_{\mathbb{S}^2}e^{-c}\zeta_{a}\frac{e^{-2\sigma}}{\sin^4\theta}\left(\omega'
+\chi\psi'-\psi\chi'\right)^2 dA,
\end{equation}
where $dA$ is the area form on the round sphere $\mathbb{S}^2$. Furthermore, calculations show that the Gauss curvature is given by
\begin{equation}\label{38}
K=\frac{e^{\sigma-2c}}{\sin\theta}\left[\sin\theta-\sigma'\cos\theta-\frac{1}{2}\sigma'^{2}\sin\theta
-\frac{1}{2}(\sin\theta\sigma')'\right],
\end{equation}
and
\begin{equation}\label{39}
|\nabla v|^2=\frac{e^{-2c}}{4}\left(\frac{\zeta_{a}'^{2}}{\zeta_{a}}-2\zeta_{a}\sigma'+\zeta_{a}
\sigma'^2\right).
\end{equation}
It follows that
\begin{equation}\label{40}
\int_{S}(|\nabla v|^2+Kv^2)dA_{\gamma}
=\int_{\mathbb{S}^2}e^{-c}\left[\frac{\zeta_{a}'^2}{4\zeta_{a}}+\zeta_{a}
-\frac{\zeta_{a}'\sigma'}{2}-\frac{\zeta_{a}\sigma'^2}{4}
-\frac{\zeta_{a}\sigma'\cos\theta}{\sin\theta}-\frac{\zeta_{a}(\sin\theta\sigma')'}{2\sin\theta}\right]
dA.
\end{equation}
Integrating the last two terms by parts produces
\begin{align}\label{41}
\begin{split}
&\int_{\mathbb{S}^2}e^{-c}\left[-\frac{\zeta_{a}\sigma'\cos\theta}{\sin\theta}
-\frac{\zeta_{a}(\sin\theta\sigma')'}{2\sin\theta}\right]dA\\
=&\int_{\mathbb{S}^2}e^{-c}\left[\frac{\zeta_{a}'\sigma'}{2}
+(\zeta_{a}'\cot\theta-\zeta_{a})\sigma\right]dA
+2\pi e^{-c}\left(\zeta_{a}\sigma(0)+\zeta_{a}\sigma(\pi)\right),
\end{split}
\end{align}
so that
\begin{equation}\label{42}
\int_{S}(|\nabla v|^2+Kv^2)dA_{\gamma}
=4\pi c\alpha_{a}e^{-c}
+\int_{\mathbb{S}^2}e^{-c}\left(\frac{\zeta_{a}'^2}{4\zeta_{a}}+\zeta_{a}
-\frac{\zeta_{a}\sigma'^2}{4}-(1+\Lambda a^2\cos^{2}\theta)\sigma\right)dA,
\end{equation}
where
\begin{equation}\label{43}
\alpha_{a}=\zeta_{a}(0)=\zeta_{a}(\pi)=1+\frac{\Lambda a^2}{3}
\end{equation}
and we have used
\begin{equation}\label{44}
\zeta_{a}'\cot\theta-\zeta_{a}=-(1+\Lambda a^2\cos^2\theta).
\end{equation}
By combining \eqref{36}, \eqref{37}, and \eqref{42}, the stability inequality Lemma \ref{stability} yields
\begin{equation}\label{45}
4\pi c \alpha_{a}+\beta_{a}\geq \mathcal{I}_{a}(\Psi),
\end{equation}
where
\begin{equation}\label{46}
\beta_{a}=\int_{\mathbb{S}^2}\left(\frac{\zeta_{a}'^2}{4\zeta_{a}}+\zeta_{a}\right)dA,
\quad\quad\quad\Psi=(\sigma,\omega,\chi,\psi),
\end{equation}
and
\begin{align}\label{47}
\begin{split}
\mathcal{I}_{a}(\Psi)=&\int_{\mathbb{S}^2}(1+\Lambda a^2\cos^2\theta)\sigma dA\\
&
+\int_{\mathbb{S}^2}\zeta_{a}\left(\frac{\sigma'^2}{4}+\frac{e^{-2\sigma}}{\sin^4\theta}(\omega'+\chi\psi'-\psi\chi')^2
+\frac{e^{-\sigma}}{\sin^2\theta}(\chi'^2+\psi'^2)
+\Lambda\left(\frac{A}{4\pi}\right)^2 e^{-\sigma}\right)dA.
\end{split}
\end{align}
Finally, by recalling that $A=4\pi e^c$, \eqref{45} may be rewritten as
\begin{equation}\label{48}
A\geq 4\pi e^{\frac{\mathcal{I}_{a}(\Psi)-\beta_{a}}{4\pi \alpha_{a}}}.
\end{equation}

Inequality \eqref{48} is the desired area lower bound which will play a central role in the proof of Theorem \ref{thm1}. Typically when establishing geometric inequalities in the spirit of \eqref{9}, after a lower bound has been achieved for the area in terms of a functional related to a harmonic map energy, the next step is to show that the functional attains a global minimum at an appropriate extreme black hole configuration. This is fairly straightforward when the cosmological constant is not present, since in that case the functional $\mathcal{I}_{a}$ is simply a renormalized harmonic map energy. In the current situation, when $\Lambda\neq 0$, the primary difficulty arises from the fact that $\mathcal{I}_{a}$ depends on the area $A$. A consequence of this is that an infimum may not exist when minimizing the functional over all maps $\Psi$ with fixed angular momentum $\mathcal{J}$ and charge $Q$, since the triple $(A,\mathcal{J},Q)$ may not arise from an extreme KNdS black hole. A similar situation occurs in \cite{ClementReirisSimon}, and is resolved with a scaling argument which we now generalize.

\begin{lemma}\label{scaling}
Given $(A,\mathcal{J},Q)\in\mathbb{R}^3_{+}$, there exists a unique $(\hat{A},\hat{\mathcal{J}},\hat{Q})\in\mathbb{R}^{3}_{+}$ which saturates \eqref{9} and satisfies
\begin{equation}\label{49}
\hat{\mathcal{J}}=\frac{\mathcal{J}}{A^2}\hat{A}^2,\quad\quad\quad
\hat{Q}=\frac{Q}{A}\hat{A},\quad\quad\quad \hat{A}\leq\frac{4\pi}{\Lambda}.
\end{equation}
Moreover, inequality \eqref{9} is equivalent to the inequality $\hat{A}\geq A$.
\end{lemma}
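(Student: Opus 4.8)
The plan is to collapse the scaling relations \eqref{49} into a single one-variable problem. Writing $t=\hat A/A>0$, the prescription \eqref{49} reads $\hat{\mathcal J}=\mathcal J t^2$ and $\hat Q=Qt$, so substituting $(\hat A,\hat{\mathcal J},\hat Q)=(At,\mathcal J t^2,Qt)$ into the equality case of \eqref{9} and cancelling the common factor $t^2$ reduces saturation to the vanishing of the quadratic
\[
g(t)=\frac{A^2}{64\pi^2}\left[\left(1-\frac{\Lambda A t}{4\pi}\right)\left(1-\frac{\Lambda A t}{12\pi}\right)-\frac{2\Lambda Q^2 t^2}{3}\right]-\mathcal J^2 t^2-\frac{Q^2}{4}.
\]
The first point I would record is that $g(1)$ is exactly the difference of the right- and left-hand sides of \eqref{9}, so that \eqref{9} holds precisely when $g(1)\ge 0$; moreover the admissible range $\hat A\le 4\pi/\Lambda$ corresponds to $t\le 4\pi/(\Lambda A)$. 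Everything then reduces to the sign behavior of $g$ on this interval.

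The key step is to show that $g$ is strictly decreasing on $(0,4\pi/(\Lambda A)]$. Since $g$ is quadratic, $g'$ is affine, so it suffices to check that $g'$ is negative at the two endpoints: a linear function negative at both ends of an interval is negative throughout. At the left end $g'(0)=-\Lambda A^3/(192\pi^3)<0$ is immediate, and a short computation at the right end yields $g'(4\pi/(\Lambda A))=-\Lambda A^3/(384\pi^3)-AQ^2/(12\pi)-8\pi\mathcal J^2/(\Lambda A)$, again manifestly negative. I would emphasize this endpoint check over an appeal to concavity, since the leading coefficient of $g$ has no definite sign; the check gives strict monotonicity unconditionally.

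With monotonicity in hand I would localize the unique zero by evaluating the endpoints. One computes $g(0)=A^2/(64\pi^2)-Q^2/4$, while at $t=4\pi/(\Lambda A)$ the relation $\Lambda A t=4\pi$ collapses the bracketed product and leaves the manifestly negative value $g(4\pi/(\Lambda A))=-Q^2/(6\Lambda)-16\pi^2\mathcal J^2/(\Lambda^2 A^2)-Q^2/4$. Provided $g(0)>0$, the intermediate value theorem together with strict monotonicity produces a unique $t^\ast\in(0,4\pi/(\Lambda A))$ with $g(t^\ast)=0$, and $(\hat A,\hat{\mathcal J},\hat Q)=(At^\ast,\mathcal J(t^\ast)^2,Qt^\ast)$ is then the unique saturating triple with $\hat A\le 4\pi/\Lambda$. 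The equivalence now falls out: as long as $t=1$ lies in the monotonicity interval, $g(1)\ge 0\iff 1\le t^\ast\iff \hat A\ge A$, and since $g(1)$ is the deficit in \eqref{9} this is exactly the statement that \eqref{9} is equivalent to $\hat A\ge A$.

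The main obstacle is the localization step rather than the formal equivalence. Two computations carry the weight: the endpoint values of $g'$, which force strict monotonicity independently of the sign of the leading coefficient, and the boundary evaluation showing $g<0$ at $\hat A=4\pi/\Lambda$. The remaining care concerns the admissible regime: the root lands in $(0,4\pi/\Lambda]$ precisely when $g(0)>0$, i.e. when the charge bound $4\pi Q<A$ holds, and the clean reading of the equivalence requires $t=1$ to sit in the monotonicity interval, i.e. $A\le 4\pi/\Lambda$. I would therefore verify at the outset that the ambient hypotheses place the data in this regime, so that the constructed $(\hat A,\hat{\mathcal J},\hat Q)$ is a genuine extreme KNdS horizon.
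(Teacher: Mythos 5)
Your proof follows the same basic strategy as the paper's --- restrict to the scaling curve $(At,\mathcal{J}t^{2},Qt)$ and locate where saturation occurs --- but your execution is genuinely sharper. The paper establishes existence by comparing asymptotics at $\tau\rightarrow 0$ and $\tau\rightarrow\infty$ and then asserts uniqueness via an unexplained ``further analysis of the zeros of the associated quartic''; you instead divide out $t^{2}$ to get an honest quadratic $g$, prove $g'<0$ on $[0,4\pi/(\Lambda A)]$ by evaluating the affine function $g'$ at the two endpoints, and obtain existence and uniqueness simultaneously from $g(0)>0$, $g(4\pi/(\Lambda A))<0$, and strict monotonicity. This is a complete argument where the paper's is a sketch, and your remark that the leading coefficient of $g$ has no definite sign (so concavity cannot be invoked) is exactly the point the paper elides.

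One discrepancy deserves comment. Your existence step requires $g(0)=\frac{A^{2}}{64\pi^{2}}-\frac{Q^{2}}{4}>0$, i.e.\ $4\pi Q<A$, which you rightly flag as an unverified hypothesis; note that it does not follow from the area--charge inequality in \eqref{10}, which only yields $A\geq 4\pi Q^{2}$ up to the $\Lambda$-correction. This condition is an artifact of taking \eqref{9} at face value: the term $\frac{Q^{2}}{4}$ there is dimensionally inconsistent with $\mathcal{J}^{2}$ and with the extreme Kerr--Newman limit $A=8\pi\sqrt{\mathcal{J}^{2}+Q^{4}/4}$, and the paper itself writes $Q^{4}(\tau)/4$ in \eqref{51}; the intended left-hand side is $\mathcal{J}^{2}+\frac{Q^{4}}{4}$. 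With that correction the charge term scales as $t^{4}$ rather than $t^{2}$ along the curve, it joins the $\mathcal{J}^{2}t^{2}$ coefficient after dividing by $t^{2}$, and one gets $g(0)=\frac{A^{2}}{64\pi^{2}}>0$ unconditionally, while your endpoint computations for $g'$ and for $g(4\pi/(\Lambda A))$ go through with only additional negative terms. So your argument in fact proves the lemma exactly as stated, for all $(A,\mathcal{J},Q)\in\mathbb{R}^{3}_{+}$, once the exponent is repaired. The remaining caveat --- that reading off the equivalence $g(1)\geq 0\Leftrightarrow\hat{A}\geq A$ requires $1\leq 4\pi/(\Lambda A)$, i.e.\ $A\leq 4\pi/\Lambda$ --- is real, but it is equally implicit in the paper's proof and is supplied in context by the first inequality in \eqref{10}.
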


\begin{proof}
Consider the curve in $\mathbb{R}^3_{+}$ given by
\begin{equation}\label{50}
f(\tau)=(A(\tau),\mathcal{J}(\tau),Q(\tau))
=\left(\tau,\frac{\mathcal{J}}{A^2}\tau^2,\frac{Q}{A}\tau\right).
\end{equation}
For small $\tau$ the two sides of \eqref{9} have the asymptotics
\begin{equation}\label{51}
\mathcal{J}^2(\tau)+\frac{Q^4(\tau)}{4}\sim \tau^4,\quad\quad
\frac{A^2(\tau)}{64\pi^2}\left[\left(1-\frac{\Lambda A(\tau)}{4\pi}\right)\left(1-\frac{\Lambda A(\tau)}{12\pi}\right)-\frac{2\Lambda Q^2(\tau)}{3}\right]\sim \tau^2.
\end{equation}
Thus, for small $\tau$ inequality \eqref{9} holds when restricted to the curve $f$, although for large $\tau$ it is clear that the opposite inequality holds. It follows that there exists a time $\tau=\hat{A}$ for which \eqref{9} is saturated. Further analysis of the zeros of the associated quartic equation show that this time is unique among those for which $\tau=A(\tau)\leq\frac{4\pi}{\Lambda}$.

Lastly, the inequality $\hat{A}\geq A$ holds if and only if the point $(A,\mathcal{J},Q)$ lies below the surface in $\mathbb{R}^3_{+}$ defined by equality in \eqref{9}; here `below' refers to the interpretation of the $\mathcal{J}$-axis as measuring height. Therefore $\hat{A}\geq A$ if and only if the inequality \eqref{9} holds.
\end{proof}

The fact that $(\hat{A},\hat{\mathcal{J}},\hat{Q})$ saturates \eqref{9} implies that these values for the area, angular momentum, and charge arise from an extreme KNdS solution. This particular extreme KNdS solution yields a map $\Psi_0$ (see Appendix \ref{sec5}) which is a candidate minimizer for a rescaled version of the functional in \eqref{47}. To construct the rescaled functional let $\mathcal{M}(\hat{\mathcal{J}},\hat{Q})$ denote the mass of the extreme KNdS black hole, and set
\begin{equation}\label{52}
\hat{a}=\frac{\hat{\mathcal{J}}}{\mathcal{M}(\hat{\mathcal{J}},\hat{Q})},\quad\quad\quad
\hat{\Psi}=(\hat{\sigma},\hat{\omega},\hat{\chi},\hat{\psi})
=\left(\sigma+\log\frac{\hat{A}}{A},\frac{\hat{A}^2}{A^2}\omega,\frac{\hat{A}}{A}\chi,
\frac{\hat{A}}{A}\psi\right).
\end{equation}
Note that $\hat{a}$ is the value of the parameter $a$ in the extreme KNdS solution (Appendix \ref{sec5}) having angular momentum $\hat{\mathcal{J}}$ and charge $\hat{Q}$, and moreover
\begin{equation}\label{52.1}
\hat{\mathcal{J}}=\frac{\hat{\omega}(\pi)-\hat{\omega}(0)}{4},\quad\quad\quad
\hat{Q}_e=\frac{\hat{\chi}(\pi)-\hat{\chi}(0)}{2},\quad\quad\quad
\hat{Q}_b=\frac{\hat{\psi}(\pi)-\hat{\psi}(0)}{2}.
\end{equation}
A calculation shows that
\begin{equation}\label{53}
\mathcal{I}_{\hat{a}}(\Psi)=\mathcal{I}_{\hat{a}}(\hat{\Psi})
-2\log\frac{\hat{A}}{A}\int_{\mathbb{S}^2}(1+\Lambda\hat{a}^2\cos^2\theta)dA
=\mathcal{I}_{\hat{a}}(\hat{\Psi})-8\pi \alpha_{\hat{a}}\log\frac{\hat{A}}{A},
\end{equation}
and therefore \eqref{48} becomes
\begin{equation}\label{54}
A\geq 4\pi e^{\frac{\mathcal{I}_{\hat{a}}(\hat{\Psi})-\beta_{\hat{a}}}{4\pi \alpha_{\hat{a}}}}
\frac{A^2}{\hat{A}^2}.
\end{equation}

\begin{lemma}\label{lemma3}
The area-angular momentum-charge-$\Lambda$ inequality \eqref{9} holds if
$\mathcal{I}_{\hat{a}}(\hat{\Psi})\geq\mathcal{I}_{\hat{a}}(\Psi_0)$.
\end{lemma}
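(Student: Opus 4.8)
The plan is to feed inequality \eqref{54} into the comparison with the extreme configuration $\Psi_0$, using the fact that \eqref{54} already packages the entire stability argument. The decisive input is that the area lower bound \eqref{48} is \emph{saturated} by the extreme Kerr-Newman-de Sitter horizon. Indeed, the test function \eqref{35} with parameter $a=\hat{a}$ was chosen precisely so that the stability inequality of Lemma \ref{stability} becomes an equality for this horizon; consequently every inequality in the derivation running from \eqref{33} to \eqref{48} collapses to an equality when evaluated on $\Psi_0$. Since $\Psi_0$ has area $\hat{A}$, this would yield the saturation identity
\begin{equation*}
\hat{A}=4\pi e^{\frac{\mathcal{I}_{\hat{a}}(\Psi_0)-\beta_{\hat{a}}}{4\pi\alpha_{\hat{a}}}},
\qquad\text{equivalently}\qquad
e^{\frac{\mathcal{I}_{\hat{a}}(\Psi_0)-\beta_{\hat{a}}}{4\pi\alpha_{\hat{a}}}}=\frac{\hat{A}}{4\pi}.
\end{equation*}

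First I would record this saturation identity, checking along the way that $\hat{a}=\hat{\mathcal{J}}/\mathcal{M}(\hat{\mathcal{J}},\hat{Q})$ is genuinely the correct value of $a$ for the extreme KNdS solution of angular momentum $\hat{\mathcal{J}}$ and charge $\hat{Q}$, so that \eqref{35} really is the distinguished test function that saturates stability. With the identity in hand, I would invoke the hypothesis $\mathcal{I}_{\hat{a}}(\hat{\Psi})\geq\mathcal{I}_{\hat{a}}(\Psi_0)$ together with the monotonicity of the exponential to bound the exponential factor in \eqref{54} from below,
\begin{equation*}
e^{\frac{\mathcal{I}_{\hat{a}}(\hat{\Psi})-\beta_{\hat{a}}}{4\pi\alpha_{\hat{a}}}}
\geq e^{\frac{\mathcal{I}_{\hat{a}}(\Psi_0)-\beta_{\hat{a}}}{4\pi\alpha_{\hat{a}}}}=\frac{\hat{A}}{4\pi},
\end{equation*}
and then substitute into \eqref{54} to obtain
\begin{equation*}
A\geq 4\pi\cdot\frac{\hat{A}}{4\pi}\cdot\frac{A^2}{\hat{A}^2}=\frac{A^2}{\hat{A}},
\end{equation*}
which rearranges, using $A,\hat{A}>0$, to $\hat{A}\geq A$.

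Finally, Lemma \ref{scaling} asserts that $\hat{A}\geq A$ is equivalent to \eqref{9}. Hence the chain above establishes \eqref{9} under the stated hypothesis, completing the argument. The entire step after the saturation identity is purely algebraic: a single monotonicity substitution into \eqref{54} followed by an appeal to the equivalence from Lemma \ref{scaling}.

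I anticipate that the only real obstacle is the saturation identity for $\Psi_0$: one must verify that the explicit extreme KNdS data constructed in Appendix \ref{sec5} indeed makes \eqref{48} an equality at $a=\hat{a}$. This reduces to confirming that \eqref{33} is saturated by the extreme horizon for the test function \eqref{35}, and that the corresponding functional value $\mathcal{I}_{\hat{a}}(\Psi_0)$ reproduces the area $\hat{A}$ through \eqref{48}. Everything else in the lemma is elementary once this identity is secured, so I would isolate and prove it first.
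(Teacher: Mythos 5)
Your argument is correct and follows essentially the same route as the paper: establish the saturation identity $4\pi e^{(\mathcal{I}_{\hat{a}}(\Psi_0)-\beta_{\hat{a}})/4\pi\alpha_{\hat{a}}}=\hat{A}$ for the extreme KNdS horizon, substitute the hypothesis into \eqref{54} to get $A\geq A^2/\hat{A}$ and hence $\hat{A}\geq A$, then invoke the equivalence from Lemma \ref{scaling}. The paper obtains the saturation identity by direct computation of the extreme KNdS area rather than by tracing equality through the stability argument, but this is a cosmetic difference.
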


\begin{proof}
Computation of the area of the extreme KNdS horizon yields
\begin{equation}\label{56}
4\pi e^{\frac{\mathcal{I}_{\hat{a}}(\Psi_0)-\beta_{\hat{a}}}{4\pi \alpha_{\hat{a}}}}=\hat{A}.
\end{equation}
From \eqref{54} we then have $\hat{A}\geq A$, and the desired result follows Lemma \ref{scaling}.
\end{proof}

\section{Minimization and the Proof of Theorem \ref{thm1}}
\label{sec4} \setcounter{equation}{0}
\setcounter{section}{4}

In this section we study the minimization properties of the functional $\mathcal{I}_{a}$, when the parameters $a$ and $A$ defining the functional arise from an extreme KNdS black hole; it will be extremized over the space of maps $\Psi=(\sigma,\omega,\chi,\psi):\mathbb{S}^2\rightarrow\mathbb{H}^{2}_{\mathbb{C}}$ having angular momentum and charge $\mathcal{J}$, $Q_{e}$, and $Q_{b}$ arising from the same extreme KNdS solution. The fundamental reason behind the success of the minimization procedure to follow, is the fact that $\mathcal{I}_{a}$ is closely related to a harmonic map energy. To give the precise relationship, let $\Omega\subset\mathbb{S}^2$ be a domain which does not intersect the axis $\Gamma$, and consider the functional $\mathcal{I}_{\Omega}(\Psi)$ which is obtained from \eqref{47} by restricting the domain of integration to $\Omega$. Let $u=-\sigma/2-\log\sin\theta$ and set
$\tilde{\Psi}=(u,\omega,\chi,\psi):\mathbb{S}^{2}\setminus\Gamma\rightarrow
\mathbb{H}_{\mathbb{C}}^{2}$, then the quasi-harmonic map energy over $\Omega$ is given by
\begin{equation}\label{57}
E_{\Omega}(\tilde{\Psi})=\int_{\Omega}\zeta_{a}\left(u'^2+e^{4u}(\omega'+\chi\psi'-\psi\chi')^2
+e^{2u}(\chi'^2+\psi'^2)+\Lambda\left(\frac{A}{4\pi}\right)^2e^{2u}\sin^2\theta\right)dA.
\end{equation}
Recall that the complex hyperbolic plane $\H^2_\C$ is the homogeneous Riemannian manifold $(\R^4,h_0)$ with metric
\begin{equation}\label{58}
h_0 = du^2 + e^{4u} (dv + \chi d\psi - \psi d\chi)^2 + e^{2u} (d\chi^2 + d\psi^2),
\end{equation}
and therefore the pseudo-energy \eqref{57} differs from the harmonic energy by the factor $\zeta_{a}$ and the last term involving $\Lambda$. Now integrate by parts and use \eqref{44} in the form
\begin{equation}\label{59}
\operatorname{div}_{\mathbb{S}^2}\left(\zeta_{a}\nabla\log\sin\theta\right)
=-(1+\Lambda a^2\cos^2\theta)
\end{equation}
to obtain
\begin{equation}\label{60}
\mathcal{I}_{\Omega}(\Psi)
=E_{\Omega}(\tilde{\Psi})-\int_{\Omega}(1+\Lambda a^2\cos^2\theta)\log\sin\theta dA
-\int_{\partial\Omega}\zeta_{a}(\sigma+2\log\sin\theta)\partial_{\nu}\log\sin\theta ds,
\end{equation}
where $\nu$ is the unit outer normal to $\partial\Omega$. This shows that $\mathcal{I}_{a}$ may
be considered as a regularization of $E$ since the infinite term $\int\zeta_a(\log\sin\theta)'^2$ has been removed. Furthermore, since the two functionals differ only by a boundary term and a constant, they must have the same critical points.

Let $\Psi_{0}=(\sigma_{0},\omega_{0},\chi_{0},\psi_{0})$ be the renormalized map arising from
the extreme KNdS solution which is associated with the functional $\mathcal{I}_{a}$. As is shown in Appendix \ref{sec5}, $\Psi_0$ is a critical point of $\mathcal{I}_a$.  It is the purpose of this section to show that $\Psi_{0}$ realizes the global minimum for $\mathcal{I}_a$.

\begin{theorem}\label{thm2}
Suppose that $\Psi=(\sigma,\omega,\chi,\psi)$ is smooth and satisfies the asymptotics \eqref{66}
with $\omega|_{\Gamma}=\omega_{0}|_{\Gamma}$, $\chi|_{\Gamma}=\chi_{0}|_{\Gamma}$, $\psi|_{\Gamma}=\psi_{0}|_{\Gamma}$, then for any $p\geq 1$ there
exists a constant $C>0$ such that
\begin{equation}\label{61}
\mathcal{I}_a(\Psi)-\mathcal{I}_a(\Psi_{0})
\geq C\int_{\mathbb{S}^{2}}
\left(\operatorname{dist}_{\mathbb{H}_{\mathbb{C}}^{2}}(\tilde{\Psi},\tilde{\Psi}_{0})
-D\right)^2dA,
\end{equation}
where $D$ is the average value of $\operatorname{dist}_{\mathbb{H}_{\mathbb{C}}^{2}}(\tilde{\Psi},\tilde{\Psi}_{0})$.
\end{theorem}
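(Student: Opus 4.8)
The plan is to exploit the key structural observation announced in the introduction: the functional $\mathcal{I}_a$ is convex along geodesic deformations in the target $\mathbb{H}_{\mathbb{C}}^2$. First I would set up the one-parameter family of maps $\tilde{\Psi}_t$ obtained by flowing $\tilde{\Psi}_0$ along the unit-speed geodesic in $\mathbb{H}_{\mathbb{C}}^2$ joining $\tilde{\Psi}_0(\theta)$ to $\tilde{\Psi}(\theta)$ at each fixed $\theta$, so that $\tilde{\Psi}_0 = \tilde{\Psi}_{t=0}$ and $\tilde{\Psi} = \tilde{\Psi}_{t=1}$, with $t$ ranging over $[0,1]$ and the length of each geodesic being precisely $\dist_{\mathbb{H}_{\mathbb{C}}^2}(\tilde{\Psi},\tilde{\Psi}_0)$. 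Because the boundary data of $\omega,\chi,\psi$ agree with those of $\Psi_0$ on $\Gamma$ and the asymptotics \eqref{66} are preserved along the deformation, the angular momentum and charge are fixed along the family and the boundary term in \eqref{60} does not obstruct the argument.

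The heart of the proof is to compute $\frac{d^2}{dt^2}\mathcal{I}_a(\Psi_t)$ and show it is bounded below by $C\int_{\mathbb{S}^2}\dist_{\mathbb{H}_{\mathbb{C}}^2}(\tilde{\Psi},\tilde{\Psi}_0)^2\,dA$ for a fixed constant. Writing $\mathcal{I}_a$ via \eqref{60} in terms of the pseudo-energy $E$, the integrand is $\zeta_a$ times a sum of squared-norm terms that are precisely the pullback of the $\mathbb{H}_{\mathbb{C}}^2$-metric \eqref{58} applied to $\partial_\theta\tilde{\Psi}_t$, plus the extra $\Lambda$-term which is a positive multiple of $e^{2u}\sin^2\theta = $ (a component built from the target coordinate $u$). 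The standard second-variation formula for harmonic-map energy along a geodesic homotopy gives a second derivative controlled by the sectional curvature of the target evaluated against the velocity field; since $\mathbb{H}_{\mathbb{C}}^2$ has nonpositive sectional curvature, the energy density is convex in $t$, and in fact the distance function squared to a fixed point is uniformly convex away from degenerate directions. I would therefore establish the pointwise convexity estimate $\frac{d^2}{dt^2}\big(\text{integrand}\big)\geq 2\,\zeta_a\,\big|\partial_t\tilde{\Psi}_t\big|^2_{\mathbb{H}_{\mathbb{C}}^2} = 2\zeta_a\,\dist_{\mathbb{H}_{\mathbb{C}}^2}(\tilde{\Psi},\tilde{\Psi}_0)^2$, using that the geodesic has constant speed equal to the distance. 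The additional $\Lambda$-term must be checked to be itself convex along the geodesic (equivalently, that $e^{2u}$ is a convex function of arclength on $\mathbb{H}_{\mathbb{C}}^2$), which follows because $u$ is a Busemann-type coordinate whose level sets are horospheres.

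With pointwise convexity in hand, I would integrate twice in $t$. Since $\Psi_0$ is a critical point of $\mathcal{I}_a$ (established in Appendix \ref{sec5}), the first derivative $\frac{d}{dt}\mathcal{I}_a(\Psi_t)\big|_{t=0}$ vanishes, so Taylor's theorem with the integral remainder yields
\begin{equation}
\mathcal{I}_a(\Psi)-\mathcal{I}_a(\Psi_0)
=\int_0^1 (1-t)\,\frac{d^2}{dt^2}\mathcal{I}_a(\Psi_t)\,dt
\geq 2\int_0^1 (1-t)\left(\int_{\mathbb{S}^2}\zeta_a\,\dist_{\mathbb{H}_{\mathbb{C}}^2}(\tilde{\Psi},\tilde{\Psi}_0)^2\,dA\right)dt.
\end{equation}
Using $\zeta_a\geq 1$ and evaluating the elementary $t$-integral produces a lower bound of the form $C\int_{\mathbb{S}^2}\dist_{\mathbb{H}_{\mathbb{C}}^2}(\tilde{\Psi},\tilde{\Psi}_0)^2\,dA$. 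To obtain the sharper gap estimate stated in \eqref{61}, with the averaged quantity $D$ subtracted, I would replace the crude bound $\dist^2\geq 0$ in the discarded cross-terms by a Poincaré-type inequality on $\mathbb{S}^2$ that controls the oscillation $\int(\dist - D)^2$ by $\int|\nabla\dist|^2$, which in turn is absorbed by the energy-type terms already appearing convexly in the second variation.

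\emph{The main obstacle} I anticipate is the careful handling of the boundary and axis behavior. The geodesic deformation is defined on $\mathbb{S}^2\setminus\Gamma$, and the convexity computation via \eqref{60} carries a boundary term on $\partial\Omega$; one must verify that as $\Omega\uparrow\mathbb{S}^2\setminus\Gamma$ this boundary contribution vanishes, which requires the matching boundary conditions $\omega|_\Gamma=\omega_0|_\Gamma$, etc., together with the prescribed asymptotics \eqref{66} to guarantee that $\dist_{\mathbb{H}_{\mathbb{C}}^2}(\tilde{\Psi},\tilde{\Psi}_0)$ stays finite and suitably decaying near the poles. Ensuring integrability of all the convexity terms near $\Gamma$—where $e^{2u}$ blows up but $\sin^2\theta$ degenerates—is the delicate technical point that makes the estimate rigorous.
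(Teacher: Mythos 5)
Your overall strategy---geodesic deformation in $\mathbb{H}^2_{\mathbb{C}}$, convexity from nonpositive curvature, vanishing first variation at the critical point $\Psi_0$, and Taylor's theorem---is the same as the paper's. However, your central pointwise estimate is wrong. The second variation of the quasi-harmonic energy along a geodesic homotopy $F_t$ reads
\begin{equation*}
\frac{d^2}{dt^2}\,\frac{1}{2}\int_\Omega \zeta_a |dF_t|^2\,dA
=\int_\Omega \zeta_a\,\gamma_0^{ij}\left(\langle\nabla_i\partial_t F_t,\nabla_j\partial_t F_t\rangle
+\langle R(\partial_t F_t,\partial_i F_t)\partial_t F_t,\partial_j F_t\rangle\right)dA,
\end{equation*}
and nonpositive curvature together with Kato's inequality gives the lower bound $\int_\Omega \zeta_a |\nabla|\partial_t F_t||^2\,dA=\int_\Omega \zeta_a|\nabla \operatorname{dist}_{\mathbb{H}^2_{\mathbb{C}}}(F_1,F_0)|^2\,dA$ --- the \emph{gradient} of the distance, not the distance itself. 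Your claimed bound $\frac{d^2}{dt^2}(\text{integrand})\geq 2\zeta_a\operatorname{dist}^2$ is false: a family of constant maps moving along a geodesic has identically vanishing energy density while the distance between the endpoint maps is arbitrary. This is precisely why the theorem's conclusion involves $(\operatorname{dist}-D)^2$ with the average $D$ subtracted: convexity only yields $\int|\nabla\operatorname{dist}|^2$ after integrating twice in $t$, and the Poincar\'e inequality is then the essential (not optional) step converting this into $\int(\operatorname{dist}-D)^2$. Your Taylor argument as written would prove the strictly stronger bound $C\int\operatorname{dist}^2$, which does not follow from the second variation formula.

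Second, you correctly identify the axis behavior as the main obstacle but do not resolve it, and it cannot be deferred: along a geodesic deformation joining $\tilde{\Psi}_0$ directly to $\tilde{\Psi}$, the term $\frac{e^{-2\sigma_t}}{\sin^4\theta}(\omega_t'+\chi_t\psi_t'-\psi_t\chi_t')^2$ need not remain integrable near $\Gamma$ for intermediate $t$, and differentiating under the integral sign is not justified. The paper handles this with a cut-and-paste regularization: one first replaces $(\omega,\chi,\psi)$ by $(\omega_\varepsilon,\chi_\varepsilon,\psi_\varepsilon)$ agreeing with $(\omega_0,\chi_0,\psi_0)$ on $\{\sin\theta\leq\varepsilon\}$ via a logarithmic cutoff, proves $\mathcal{I}_a(\Psi_\varepsilon)\to\mathcal{I}_a(\Psi)$ using the decay rates in \eqref{65}--\eqref{66}, and only then runs the geodesic argument, where near the axis the potentials are fixed, $\sigma_t$ is affine in $t$, and the second variation is computed directly; the limit $\varepsilon\to 0$ is taken at the very end. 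Finally, your Busemann heuristic for the convexity of the $\Lambda$-term is in the right spirit, but it still requires the explicit verification $\ddot u + 2\dot u^2 = 2|\dot F|^2_{h_0} - e^{2u}(\dot\chi^2+\dot\psi^2)\geq 0$ from the geodesic equations of $h_0$.
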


The proof of this result is based on convexity of the quasi-harmonic energy under geodesic deformations; such a property is well-known for the pure harmonic energy
when the target space is nonpositively curved. To explain how this works,
let $\Omega_{\varepsilon}=\{(\theta,\phi)\in\mathbb{S}^2\mid \sin\theta>\varepsilon\}$.
Then with a cut-and-paste argument it will be shown that we may assume that
$\Psi$ satisfies
\begin{equation}\label{62}
\operatorname{supp}(\omega-\omega_{0},\chi-\chi_{0},\psi-\psi_{0})\subset \Omega_{\varepsilon}.
\end{equation}
Next, let $\tilde{\Psi}_{t}$, $t\in[0,1]$ be a geodesic in $\mathbb{H}_{\mathbb{C}}^{2}$ connecting
$\tilde{\Psi}_{1}=\tilde{\Psi}$ and $\tilde{\Psi}_{0}$, this means that for each $(\theta,\phi)$ in the domain, $t\rightarrow\tilde{\Psi}_{t}(\theta,\phi)$ is a geodesic. It then follows that
$(\omega_{t},\chi_{t},\psi_{t})\equiv(\omega_{0},\chi_{0},\psi_{0})$
on $\mathbb{S}^2\setminus\Omega_{\varepsilon}$, so that in particular $\sigma_{t}=\sigma_{0}+t(\sigma-\sigma_{0})$ on this
domain. The fact that $\sigma_{t}$ is linear together with convexity of the quasi-harmonic energy yields
\begin{equation}\label{63}
\frac{d^{2}}{dt^{2}}\mathcal{I}_{a}(\Psi_{t})
\geq 2\int_{\mathbb{S}^2}
|\nabla\operatorname{dist}_{\mathbb{H}^{2}_{\mathbb{C}}}(\tilde{\Psi},\tilde{\Psi}_{0})|^{2}dA.
\end{equation}
Furthermore, since $\Psi_{0}$ is a critical point
\begin{equation}\label{64}
\frac{d}{dt}\mathcal{I}_{a}(\Psi_{t})|_{t=0}=0.
\end{equation}
Theorem \ref{thm2} may then be established by integrating \eqref{63} and applying the Poincar\'{e} inequality. In the remainder of this section we will justify each of these steps.

Before proceeding we record the appropriate asymptotic behavior of $\Psi$. Our assumptions here are based on the asymptotics ($\theta\rightarrow 0,\pi$) of the extreme KNdS map $\Psi_0$, which are given by
\begin{equation}\label{65}
\sigma_0|_{\Gamma}=\log\left(\frac{A}{4\pi}\right),\text{ } \omega_0,\chi_0,\psi_0=O(1),\text{ }
\sigma_0', \omega_0',\chi_0',\psi_0'=O(\sin\theta),\text{ }
\omega_0'+\chi_0\psi_0'-\psi_0\chi_0'=O(\sin^{3}\theta).
\end{equation}
We then require that $\Psi$ satisfies
\begin{equation}\label{66}
\sigma|_{\Gamma}=\log\left(\frac{A}{4\pi}\right),\quad \omega,\chi,\psi=O(1),\quad
\sigma', \omega',\chi',\psi'=O(\sin\theta),\quad
\omega'+\chi\psi'-\psi\chi'=O(\sin^{1+\delta}\theta),
\end{equation}
for some $\delta>0$.

In order to carry out the proof of Theorem \ref{thm2} as outlined above, we first show that it is possible to
approximate $\mathcal{I}_a(\Psi)$ by replacing $\Psi$ with a map $\Psi_{\varepsilon}$ which satisfies \eqref{62}. This is achieved with a cut and paste argument. Define a Lipschitz cut-off function
\begin{equation}\label{67}
\varphi_{\varepsilon}(\theta)=\begin{cases}
0 & \text{ if $\sin\theta\leq\varepsilon$,} \\
\frac{\log\left(\frac{\sin\theta}{\varepsilon}\right)}{\log\left(\frac{\sqrt{\varepsilon}}{
\varepsilon}\right)} &
\text{ if $\varepsilon<\sin\theta<\sqrt{\varepsilon}$,} \\
1 & \text{ if $\sin\theta\geq\sqrt{\varepsilon}$,} \\
\end{cases}
\end{equation}
and let
\begin{equation}\label{68.0}
\Psi_{\varepsilon}=(\sigma,\omega_{\varepsilon},\chi_{\varepsilon}
,\psi_{\varepsilon}),\quad\quad\quad
(\omega_{\varepsilon},\chi_{\varepsilon}
,\psi_{\varepsilon})=(\omega_{0},\chi_{0}
,\psi_{0})
+\varphi_{\varepsilon}(\omega-\omega_{0},\chi-\chi_{0}
,\psi-\psi_{0}),
\end{equation}
so that $\Psi_{\varepsilon}=(\sigma,\omega_0,\chi_0,\psi_0)$ on $\mathbb{S}^{2}\setminus \Omega_{\varepsilon}$.

\begin{lemma}\label{lemma2}
$\lim_{\varepsilon\rightarrow 0}\mathcal{I}_{a}(\Psi_{\varepsilon})=\mathcal{I}_{a}(\Psi).$
\end{lemma}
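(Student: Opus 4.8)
The plan is to use that $\Psi_{\varepsilon}$ and $\Psi$ carry the same $\sigma$ and coincide on $\{\sin\theta\geq\sqrt{\varepsilon}\}$, where $\varphi_{\varepsilon}\equiv 1$. Hence the first term of \eqref{47}, the $\sigma'^{2}$ term, and the $\Lambda$ term (all of which involve only $\sigma$) contribute identically to $\mathcal{I}_{a}(\Psi_{\varepsilon})$ and $\mathcal{I}_{a}(\Psi)$, so only the twist and charge terms survive in the difference. Writing
\[
\mathcal{T}=\frac{e^{-2\sigma}}{\sin^{4}\theta}\left(\omega'+\chi\psi'-\psi\chi'\right)^{2}
+\frac{e^{-\sigma}}{\sin^{2}\theta}\left(\chi'^{2}+\psi'^{2}\right),
\]
and $\mathcal{T}_{\varepsilon}$ for the same expression evaluated at $\Psi_{\varepsilon}$, I would reduce the claim to
\[
\lim_{\varepsilon\to 0}\int_{\{\sin\theta<\sqrt{\varepsilon}\}}\zeta_{a}\,\mathcal{T}\,dA=0
\quad\text{and}\quad
\lim_{\varepsilon\to 0}\int_{\{\sin\theta<\sqrt{\varepsilon}\}}\zeta_{a}\,\mathcal{T}_{\varepsilon}\,dA=0 .
\]
Throughout, $\zeta_{a}$, $e^{-\sigma}$, and $e^{-2\sigma}$ are bounded near the axis since $\sigma|_{\Gamma}=\log(A/4\pi)$ and $\zeta_{a}$ is smooth.

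The first limit is the easy one. By \eqref{66} one has $\omega'+\chi\psi'-\psi\chi'=O(\sin^{1+\delta}\theta)$ and $\chi',\psi'=O(\sin\theta)$, so $\zeta_{a}\mathcal{T}\,dA=O\!\left(\sin^{-1+2\delta}\theta+\sin\theta\right)d\theta$, which is integrable on $(0,\pi)$. In particular $\mathcal{I}_{a}(\Psi)<\infty$, and absolute continuity of the integral over the shrinking set $\{\sin\theta<\sqrt{\varepsilon}\}$ gives the first limit.

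The hard part is the second limit, since differentiating the cut-off produces potentially large factors. On the inner cap $\{\sin\theta\leq\varepsilon\}$ one has $\Psi_{\varepsilon}=(\sigma,\omega_{0},\chi_{0},\psi_{0})$, and the sharper decay \eqref{65}, in particular $\omega_{0}'+\chi_{0}\psi_{0}'-\psi_{0}\chi_{0}'=O(\sin^{3}\theta)$, makes $\zeta_{a}\mathcal{T}_{\varepsilon}\,dA$ integrable there; this cap therefore contributes nothing in the limit. The real work is the transition annulus $\{\varepsilon<\sin\theta<\sqrt{\varepsilon}\}$, on which $\varphi_{\varepsilon}'=\tfrac{2\cot\theta}{\log(1/\varepsilon)}$. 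Set $F=\omega-\omega_{0}$, $G=\chi-\chi_{0}$, $H=\psi-\psi_{0}$; the matching conditions force $F,G,H$ to vanish on $\Gamma$, and together with $F',G',H'=O(\sin\theta)$ this yields the improved decay $F,G,H=O(\sin^{2}\theta)$. Expanding the twist of $\Psi_{\varepsilon}$ and observing the algebraic identity that the coefficient of $\varphi_{\varepsilon}$ equals $(T-T_{0})-(GH'-HG')$, where $T$ and $T_{0}$ denote the twists of $\Psi$ and $\Psi_{0}$, I would obtain
\[
\omega_{\varepsilon}'+\chi_{\varepsilon}\psi_{\varepsilon}'-\psi_{\varepsilon}\chi_{\varepsilon}'
=O\!\left(\sin^{1+\delta'}\theta\right)+\frac{O(\sin\theta)}{\log(1/\varepsilon)},
\qquad \delta'=\min\{\delta,2\}>0,
\]
together with $\chi_{\varepsilon}',\psi_{\varepsilon}'=O(\sin\theta)$ uniformly in $\varepsilon$.

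Feeding these bounds into $\mathcal{T}_{\varepsilon}$, the $O(\sin^{1+\delta'}\theta)$ part of the twist contributes the integrable density $O(\sin^{-1+2\delta'}\theta)\,d\theta$, whose integral over $\{\sin\theta<\sqrt{\varepsilon}\}$ vanishes as before, and the charge term contributes the bounded density $O(1)$, whose integral over a set of measure $O(\varepsilon)$ also vanishes. The one genuinely dangerous piece is the $\varphi_{\varepsilon}'$ contribution to the twist: after squaring and dividing by $\sin^{4}\theta$ it yields, over the annulus,
\[
\frac{1}{(\log(1/\varepsilon))^{2}}\int_{\varepsilon<\sin\theta<\sqrt{\varepsilon}}\frac{dA}{\sin^{2}\theta}
\;\sim\;\frac{1}{(\log(1/\varepsilon))^{2}}\int_{\varepsilon}^{\sqrt{\varepsilon}}\frac{d\theta}{\theta}
=\frac{1}{2\log(1/\varepsilon)}\longrightarrow 0 .
\]
This is exactly why the logarithmic profile of $\varphi_{\varepsilon}$ is chosen: a linear cut-off would leave the divergent integral $\int d\theta/\theta$, whereas the $1/\log(1/\varepsilon)$ prefactor absorbs it. The main obstacle, then, is precisely to establish that the coefficient of $\varphi_{\varepsilon}$ in the twist decays faster than $\sin\theta$, which the identity $(T-T_{0})-(GH'-HG')$ combined with the $O(\sin^{2}\theta)$ decay of $F,G,H$ supplies. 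Granting both limits, the difference $\mathcal{I}_{a}(\Psi_{\varepsilon})-\mathcal{I}_{a}(\Psi)$ tends to zero, which is the assertion of the lemma.
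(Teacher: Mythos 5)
Your argument is correct and follows essentially the same route as the paper's proof: the same three-region decomposition (inner cap, transition annulus $\varepsilon<\sin\theta<\sqrt{\varepsilon}$, outer region), the same improved decay $|\omega-\omega_0|+|\chi-\chi_0|+|\psi-\psi_0|=O(\sin^2\theta)$ from the axis matching, and the same $O(\sin\theta/\log(1/\varepsilon))$ control of the $\varphi_{\varepsilon}'$ terms yielding the $O(1/|\log\varepsilon|)$ bound for the twist contribution. Your only (harmless) streamlining is to note that the $\sigma$-only terms and the outer region cancel exactly in the difference $\mathcal{I}_a(\Psi_\varepsilon)-\mathcal{I}_a(\Psi)$, where the paper instead invokes dominated convergence on $\{\sin\theta\geq\sqrt{\varepsilon}\}$.
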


\begin{proof}
Write
\begin{equation}\label{69}
\mathcal{I}_{a}(\Psi_{\varepsilon})
=\mathcal{I}_{a}(\Psi_{\varepsilon})|_{\sin\theta\leq\varepsilon}
+\mathcal{I}_{a}(\Psi_{\varepsilon})|_{\varepsilon<\sin\theta<\sqrt{\varepsilon}}
+\mathcal{I}_{a}(\Psi_{\varepsilon})|_{\sin\theta\geq\sqrt{\varepsilon}},
\end{equation}
and observe that
\begin{equation}\label{69.1}
\mathcal{I}_{a}(\Psi_{\varepsilon})|_{\sin\theta\geq\sqrt{\varepsilon}}
\rightarrow \mathcal{I}_{a}(\Psi)
\end{equation}
by the dominated convergence theorem. Moreover
\begin{align}\label{70}
\begin{split}
\mathcal{I}_{a}(\Psi_{\varepsilon})|_{\sin\theta\leq\varepsilon}
=&\int_{\sin\theta\leq\varepsilon}\left(\underbrace{(1+\Lambda a^2\cos^2\theta)\sigma}_{O(1)} +\underbrace{\zeta_{a}\Lambda\left(\frac{A}{4\pi}\right)^2 e^{-\sigma}}_{O(1)}\right)dA\\
&
+\int_{\sin\theta\leq\varepsilon}
\zeta_{a}\left(\underbrace{\frac{\sigma'^2}{4}}_{O(1)}
+\frac{e^{-2\sigma}}{\sin^4\theta}
\underbrace{(\omega_0'+\chi_0\psi_0'-\psi_0\chi_0')^2}_{O(\sin^{6}\theta)}
+\frac{e^{-\sigma}}{\sin^2\theta}\underbrace{(\chi_{0}'^{2}+\psi_{0}'^{2})}_{O(\sin^{2}\theta)}\right)dA\\
\rightarrow & 0.
\end{split}
\end{align}

Next consider the region $\varepsilon<\sin\theta<\sqrt{\varepsilon}$, and note that uniform boundedness of the following integrand implies
\begin{equation}\label{71}
\int_{\varepsilon<\sin\theta<\sqrt{\varepsilon}}\left((1+\Lambda a^2\cos^2\theta)\sigma +\zeta_{a}\Lambda\left(\frac{A}{4\pi}\right)^2 e^{-\sigma}\right)dA=O(\sqrt{\varepsilon}).
\end{equation}
To proceed further use the fact that $\omega|_{\Gamma}=\omega_{0}|_{\Gamma}$, $\chi|_{\Gamma}=\chi_{0}|_{\Gamma}$, $\psi|_{\Gamma}=\psi_{0}|_{\Gamma}$ together with
\eqref{65} and \eqref{66} yields
\begin{equation}\label{72}
|\omega-\omega_0|+|\chi-\chi_0|+|\psi-\psi_0|=O(\sin^{2}\theta).
\end{equation}
Then since
\begin{equation}\label{73}
\chi_{\varepsilon}'=\varphi_{\varepsilon}\chi'+(1-\varphi_{\varepsilon})\chi_{0}'
+\varphi_{\varepsilon}'(\chi-\chi_0)
\end{equation}
and similarly for $\psi_{\varepsilon}'$, we have that for some constant $C$ independent of $\varepsilon$
\begin{align}\label{74}
\begin{split}
&\int_{\varepsilon<\sin\theta<\sqrt{\varepsilon}}
\zeta_{a}\frac{e^{-\sigma}}{\sin^{2}\theta}(\chi_{\varepsilon}'^2+\psi_{\varepsilon}'^2)dA\\
\leq &\int_{\varepsilon<\sin\theta<\sqrt{\varepsilon}}\frac{C}{\sin\theta}\left(
\underbrace{\chi'^2}_{O(\sin^2\theta)}+\underbrace{\chi_{0}'^2}_{O(\sin^2\theta)}
+\underbrace{\varphi_{\varepsilon}'^2(\chi-\chi_0)^2}_{O\left(\frac{\sin\theta}{\log\varepsilon}\right)^2}
+\underbrace{\psi'^2}_{O(\sin^2\theta)}+\underbrace{\psi_{0}'^2}_{O(\sin^2\theta)}
+\underbrace{\varphi_{\varepsilon}'^2
(\psi-\psi_0)^2}_{O\left(\frac{\sin\theta}{\log\varepsilon}\right)^2}\right)d\theta\\
=&O(\varepsilon).
\end{split}
\end{align}
Finally a calculation shows that
\begin{align}\label{75}
\begin{split}
\omega_{\varepsilon}'+\chi_{\varepsilon}\psi_{\varepsilon}'-\psi_{\varepsilon}\chi_{\varepsilon}'
=&\varphi_{\varepsilon}(\omega'+\chi\psi'-\psi\chi')
+(1-\varphi_{\varepsilon})(\omega_{0}'+\chi_{0}\psi_{0}'-\psi_{0}\chi_{0}')
+\varphi_{\varepsilon}'(\omega-\omega_{0})\\
&+\varphi_{\varepsilon}'(\chi_{0}\psi-\psi_{0}\chi)
+\varphi_{\varepsilon}(1-\varphi_{\varepsilon})[(\psi-\psi_0)(\chi-\chi_{0})'
-(\chi-\chi_0)(\psi-\psi_0)'],
\end{split}
\end{align}
and hence
\begin{align}\label{76}
\begin{split}
&\int_{\varepsilon<\sin\theta<\sqrt{\varepsilon}}\zeta_{a}\frac{e^{-2\sigma}}{\sin^4\theta}
(\omega_{\varepsilon}'+\chi_{\varepsilon}\psi_{\varepsilon}'-\psi_{\varepsilon}\chi_{\varepsilon}')^2 dA\\
\leq &\int_{\varepsilon<\sin\theta<\sqrt{\varepsilon}}\frac{C}{\sin^3\theta}
\left(\underbrace{(\omega'+\chi\psi'-\psi\chi')^2}_{O(\sin^{2+2\delta}\theta)}
+\underbrace{(\omega_0'+\chi_0\psi_0'-\psi_0\chi_0')^2}_{O(\sin^{6}\theta)}
+\underbrace{\varphi_{\varepsilon}'^2(\omega-\omega_0)^2}
_{O\left(\frac{\sin\theta}{\log\varepsilon}\right)^2}\right)d\theta\\
&+\int_{\varepsilon<\sin\theta<\sqrt{\varepsilon}}\frac{C}{\sin^3\theta}
\left(\underbrace{\varphi_{\varepsilon}'^2(\chi_0\psi-\psi_0\chi)^2}
_{O\left(\frac{\sin\theta}{\log\varepsilon}\right)^2}
+\underbrace{(\psi-\psi_0)^2(\chi'-\chi_0')^2}_{O(\sin^6\theta)}
+\underbrace{(\chi-\chi_0)^2(\psi'-\psi_0')^2}_{O(\sin^6\theta)}\right)d\theta\\
=& O\left(\frac{1}{|\log\varepsilon|}\right).
\end{split}
\end{align}
It follows that $\mathcal{I}_{a}(\Psi_{\varepsilon})|_{\varepsilon<\sin\theta<\sqrt{\varepsilon}}\rightarrow 0$.
\end{proof}

The next proposition establishes the primary tool used in the minimization procedure. Namely, the quasi-harmonic energy \eqref{57} is convex along geodesic deformations.

\begin{prop}\label{prop1}
Let $F_{t}:\Omega\rightarrow\mathbb{H}_{\mathbb{C}}^2$ be a family of smooth maps, where $\Omega$ is a domain in $\mathbb{S}^2$. Suppose that for each $(\theta,\phi)\in\Omega$ the curve $t\mapsto F_{t}(\theta,\phi)$, $t\in[0,1]$ is a geodesic, then
\begin{equation}\label{76.1}
\frac{d^2}{dt^2} E_{\Omega}(F_{t})\geq
2\int_{\Omega}|\nabla\operatorname{dist}_{\mathbb{H}_{\mathbb{C}}^2}(F_{1},F_{0})|^2dA.
\end{equation}
\end{prop}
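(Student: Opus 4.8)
The plan is to reduce the convexity statement to the standard fact that the energy density of a map into a nonpositively curved target is convex along geodesic homotopies, while tracking the two extra structures present in $E_\Omega$: the nonnegative weight $\zeta_a$ and the potential term $\Lambda\left(\tfrac{A}{4\pi}\right)^2 e^{2u}\sin^2\theta$. First I would write the integrand of \eqref{57} in intrinsic form as $\zeta_a\bigl(|\partial_\theta F_t|^2_{h_0} + P(F_t)\bigr)$, where $|\partial_\theta F_t|^2_{h_0}$ is the squared $h_0$-norm of the $\theta$-derivative of $F_t$ (this is exactly the $u'^2 + e^{4u}(\omega'+\chi\psi'-\psi\chi')^2 + e^{2u}(\chi'^2+\psi'^2)$ grouping, since $h_0$ is the metric \eqref{58}) and $P(F_t) = \Lambda\left(\tfrac{A}{4\pi}\right)^2 e^{2u}\sin^2\theta$ is a potential depending only on the $u$-coordinate of $F_t$. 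Because the $\phi$-dependence is trivial for these axisymmetric maps, the only spatial derivative entering the energy density is the $\theta$-derivative, so the geodesic-convexity computation is essentially one-dimensional in the domain variable.

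The key step is the second-variation computation for the kinetic term. For a fixed $(\theta,\phi)$, the curve $t\mapsto F_t(\theta,\phi)$ is an $h_0$-geodesic, so its $t$-velocity field $V = \partial_t F_t$ is parallel along $t$. A standard second-variation argument (differentiating $\tfrac12|\partial_\theta F_t|^2_{h_0}$ twice in $t$ and using that the curvature of $\mathbb{H}^2_\mathbb{C}$ is nonpositive) gives
\begin{equation}\label{secondvar}
\frac{d^2}{dt^2}\,\tfrac12|\partial_\theta F_t|^2_{h_0}
= |\nabla_{\partial_\theta} V|^2_{h_0}
- \langle R(V,\partial_\theta F_t)\partial_\theta F_t, V\rangle_{h_0}
\geq |\nabla_{\partial_\theta}V|^2_{h_0},
\end{equation}
where $\nabla$ is the Levi-Civita connection of $h_0$ and $R$ its curvature tensor; nonpositivity of the sectional curvature kills the curvature term. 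I would then identify $|\nabla_{\partial_\theta}V|_{h_0}$ with $|\partial_\theta \operatorname{dist}_{\mathbb{H}^2_\mathbb{C}}(F_1,F_0)|$: for a unit-speed geodesic the tangential part of $\nabla_{\partial_\theta}V$ equals the $\theta$-derivative of the geodesic's length (which is the distance), and the normal part only adds a nonnegative contribution, yielding $|\nabla_{\partial_\theta}V|^2_{h_0}\geq |\partial_\theta\operatorname{dist}_{\mathbb{H}^2_\mathbb{C}}(F_1,F_0)|^2 = |\nabla\operatorname{dist}_{\mathbb{H}^2_\mathbb{C}}(F_1,F_0)|^2$, the last equality because $\nabla$ on the axisymmetric sphere sees only $\theta$. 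Integrating \eqref{secondvar} against the weight $\zeta_a\,dA$ produces the factor $2$ and the right-hand side of \eqref{76.1}.

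It remains to handle the potential term, and this is where I expect the only real subtlety. I would show that $t\mapsto P(F_t)$ is convex by checking that $u\mapsto e^{2u}$ is convex along the $u$-component of an $h_0$-geodesic. Since $u$ is a globally defined coordinate on $\mathbb{H}^2_\mathbb{C}$ whose level sets are the horospheres of the metric \eqref{58}, the function $u$ is itself convex along geodesics (it is, up to normalization, a Busemann-type function for this homogeneous negatively curved space), so $\tfrac{d^2}{dt^2}u(F_t)\geq 0$; combined with the convexity and monotonicity of $s\mapsto e^{2s}$ this gives $\tfrac{d^2}{dt^2}e^{2u(F_t)}\geq 0$, hence $\tfrac{d^2}{dt^2}P(F_t)\geq 0$ pointwise. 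The one point requiring care is the convexity of $u$ along $h_0$-geodesics; if a direct verification is preferred over the Busemann-function argument, I would compute the Hessian of $u$ with respect to $h_0$ and confirm it is positive semidefinite using the explicit Christoffel symbols of \eqref{58}. Since $\zeta_a\geq 1>0$ and the potential contribution to the second variation is nonnegative, dropping it only weakens the inequality, so adding it back to \eqref{secondvar} and integrating yields \eqref{76.1} exactly as stated.
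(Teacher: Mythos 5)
Your proposal is correct and follows essentially the same route as the paper: a second-variation computation for the weighted kinetic term using the geodesic property, nonpositive curvature of $\mathbb{H}^2_{\mathbb{C}}$, and the Kato-type inequality $|\nabla_{\partial_\theta}V|\geq|\partial_\theta|V||$ with $|V|=\operatorname{dist}(F_1,F_0)$, together with pointwise convexity of the potential term verified through the Christoffel symbols of $h_0$. The only cosmetic difference is that the paper establishes $\ddot{u}+2\dot{u}^2\geq 0$ directly from the geodesic equation (which is all that is needed for $\partial_t^2 e^{2u}\geq 0$), whereas you first prove the slightly stronger statement $\ddot{u}\geq 0$ (convexity of $u$ along geodesics, which the explicit computation $-\Gamma^u_{jl}\dot{F}^j\dot{F}^l=2e^{4u}(\dot{\omega}+\chi\dot{\psi}-\psi\dot{\chi})^2+e^{2u}(\dot{\chi}^2+\dot{\psi}^2)\geq 0$ indeed confirms) and then compose with the increasing convex function $s\mapsto e^{2s}$.
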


\begin{proof}
Let $\gamma_0$ and $h_0$ denote the metrics on the round 2-sphere and complex hyperbolic plane, respectively. Then the square of the harmonic energy density is
\begin{equation}\label{77}
|dF_{t}|^2=\gamma_{0}^{ij}(h_{0})_{lm}\partial_{i}F_{t}^{l}\partial_{j}F_{t}^{m}.
\end{equation}
We then have
\begin{equation}\label{78}
E_{\Omega}(F_{t})=\int_{\Omega}\zeta_{a}\left(|dF_{t}|^2+\Lambda\left(\frac{A}{4\pi}\right)^2
e^{2u_t}\sin^2\theta\right)dA,
\end{equation}
where $F_{t}=(u_t,\omega_t,\chi_t,\psi_t)$. Observe that since $F_t$ is a geodesic and $\mathbb{H}_{\mathbb{C}}^2$ is negatively curved
\begin{align}\label{79}
\begin{split}
\frac{d^2}{dt^2}\frac{1}{2}\int_{\Omega}\zeta_{a}|dF_{t}|^2dA
=&\int_{\Omega}\zeta_a \gamma_0^{ij}\left(\langle\nabla_{t}\partial_{i}F_{t},\nabla_{t}\partial_{j}F_{t}\rangle_{h_{0}}
+\langle\nabla_{t}\nabla_{i}\partial_{t}F_{t},\partial_{j}F_{t}\rangle_{h_{0}}\right)dA\\
=&\int_{\Omega}\zeta_a \gamma_0^{ij}\left(\langle\nabla_{i}\partial_{t}F_{t},\nabla_{j}\partial_{t}F_{t}\rangle_{h_{0}}
+\langle R^{\mathbb{H}_{\mathbb{C}}^2}(\partial_{t}F_{t},\partial_{i}F_{t})\partial_{t}F_{t},
\partial_{j}F_{t}\rangle_{h_{0}}\right)dA\\
\geq &\int_{\Omega}\zeta_a |\nabla|\partial_{t}F_{t}|_{h_0}|_{\gamma_0}^2 dA\\
\geq&\int_{\Omega}|\nabla\operatorname{dist}_{\mathbb{H}_{\mathbb{C}}^2}(F_{1},F_{0})|^2dA,
\end{split}
\end{align}
where the last step follows from the fact that $\zeta_{a}\geq 1$ and $F_t$ is a geodesic
parameterized on the interval $[0,1]$, so that $|\partial_{t}F_{t}|_{h_0}
=\operatorname{dist}_{\mathbb{H}_{\mathbb{C}}^2}(F_{1},F_{0})$.

In what follows, for simplicity of notation, we refrain from indicating dependence on $t$. To complete the proof it is sufficient to show that
\begin{equation}\label{80}
\partial_{t}^{2}e^{2u}=2(\ddot{u}+2\dot{u}^2)e^{2u}\geq 0,
\end{equation}
where $\dot{u}=\partial_{t}u$. From the geodesic equation we have
\begin{equation}\label{81}
\ddot{u}+\Gamma_{jl}^{u}\dot{F}^{j}\dot{F}^{l}=0,
\end{equation}
and a computation of Christoffel symbols for $h_0$ yields
\begin{equation}\label{82}
\Gamma_{uu}^{u}=\Gamma_{u\omega}^{u}=\Gamma_{u\chi}^{u}=\Gamma_{u\psi}^{u}=0,\quad
\Gamma_{\omega\omega}^{u}=-2e^{4u},\quad \Gamma_{\omega\chi}^{u}=2\psi e^{4u},\quad
\Gamma_{\omega\psi}^{u}=-2\chi e^{4u},
\end{equation}
\begin{equation}\label{83}
\Gamma_{\chi\chi}^{u}=-e^{2u}-2\psi^2 e^{4u},\quad
\Gamma_{\psi\psi}^{u}=-e^{2u}-2\chi^2 e^{4u},\quad
\Gamma_{\chi\psi}^{u}=2\chi\psi e^{4u}.
\end{equation}
It follows that
\begin{equation}\label{84}
\ddot{u}=2 e^{4u}\dot{\omega}^2
+(e^{2u}+2\psi^2 e^{4u})\dot{\chi}^2+(e^{2u}+2\chi^2 e^{4u})\dot{\psi}^2
-4\psi e^{4u}\dot{\omega}\dot{\chi}+4\chi e^{4u}\dot{\omega}\dot{\psi}
-4\chi\psi e^{4u}\dot{\chi}\dot{\psi}.
\end{equation}
Note that the expression on the right-hand side is related to the constant squared length of the velocity vector for the geodesic
\begin{equation}\label{85}
|\dot{F}|_{h_0}^2=\dot{u}^2+e^{4u}(\dot{\omega}+\chi\dot{\psi}-\psi\dot{\chi})^2
+e^{2u}(\dot{\chi}^2+\dot{\psi}^2).
\end{equation}
Therefore
\begin{equation}\label{86}
\ddot{u}+2\dot{u}^2
=2|\dot{F}|_{h_0}^2-e^{2u}(\dot{\chi}^2+\dot{\psi}^2)\geq 0.
\end{equation}
\end{proof}

We are now in a position to prove the main result of this section.\medskip

\noindent\textit{Proof of Theorem \ref{thm2}.} Recall that $\Psi_{\varepsilon}$ satisfies
\eqref{62}. Thus, if $\tilde{\Psi}^{t}_{\varepsilon}$ is the geodesic connecting $\tilde{\Psi}_{0}$ to
$\tilde{\Psi}_{\varepsilon}$ as described at the beginning of this section, then
$\sigma^{t}=\sigma_{0}+t(\sigma-\sigma_{0})$. Observe that
\begin{equation}\label{87}
\frac{d^{2}}{dt^{2}}\mathcal{I}_a(\Psi^{t}_{\varepsilon})
=\underbrace{\frac{d^{2}}{dt^{2}}\mathcal{I}_{\Omega_{\varepsilon}}(\Psi^{t}_{\varepsilon})}_{I_{1}}+
\underbrace{\frac{d^{2}}{dt^{2}}\mathcal{I}_{\mathbb{S}^2\setminus\Omega_{\varepsilon}}
(\Psi^{t}_{\varepsilon})}_{I_{2}}.
\end{equation}
Then using the linearity of $\sigma^t$ together with Proposition
\ref{prop1} produces
\begin{align}\label{88}
\begin{split}
I_{1}&=\frac{d^{2}}{dt^{2}}E_{\Omega_{\varepsilon}}
(\tilde{\Psi}^{t}_{\varepsilon})
-\frac{d^{2}}{dt^{2}}\int_{\partial\Omega_{\varepsilon}}
\zeta_{a}(\sigma_{0}+t(\sigma-\sigma_{0})+2\log\sin\theta)\partial_{\nu}\log\sin\theta ds\\
&\geq 2\int_{\Omega_{\varepsilon}}
|\nabla\operatorname{dist}_{\mathbb{H}^{2}_{\mathbb{C}}}
(\tilde{\Psi}_{\varepsilon},\tilde{\Psi}_{0})|^{2}dA
\end{split}
\end{align}
On the other hand, direct computation yields
\begin{align}\label{89}
\begin{split}
I_2=&\int_{\mathbb{S}^2\setminus\Omega_{\varepsilon}}
\zeta_{a}\left(\frac{(\sigma-\sigma_0)'^2}{2}
+\frac{4(\sigma-\sigma_0)^2 e^{-2\sigma^t}}{\sin^4\theta}
(\omega_{0}'+\chi_0\psi_0'-\psi_0\chi_0')^2\right)dA\\
&+\int_{\mathbb{S}^2\setminus\Omega_{\varepsilon}}
\zeta_{a}\left(\frac{(\sigma-\sigma_0)^2 e^{-\sigma^t}}{\sin^2\theta}(\chi_0'^2+\psi_0'^2)
+\Lambda\left(\frac{A}{4\pi}\right)^2(\sigma-\sigma_0)^2 e^{-\sigma^t}\right)dA\\
\geq & \int_{\mathbb{S}^2\setminus\Omega_{\varepsilon}}\frac{(\sigma-\sigma_0)'^2}{2}dA\\
=&2\int_{\mathbb{S}^2\setminus\Omega_{\varepsilon}}
|\nabla\operatorname{dist}_{\mathbb{H}^{2}_{\mathbb{C}}}
(\tilde{\Psi}_{\varepsilon},\tilde{\Psi}_{0})|^{2}dA,
\end{split}
\end{align}
since $\operatorname{dist}_{\mathbb{H}^{2}_{\mathbb{C}}}
(\tilde{\Psi}_{\varepsilon},\tilde{\Psi}_{0})=|u-u_0|$ on $\mathbb{S}^2\setminus\Omega_{\varepsilon}$. Note that the passing of $\frac{d^{2}}{dt^{2}}$ into the integral in \eqref{89} is justified, since each term on the right-hand side of the first equality is uniformly integrable. Combining \eqref{88} and \eqref{89} gives the desired convexity statement
\begin{equation}\label{87}
\frac{d^{2}}{dt^{2}}\mathcal{I}_a(\Psi^{t}_{\varepsilon})
\geq 2\int_{\mathbb{S}^2}
|\nabla\operatorname{dist}_{\mathbb{H}^{2}_{\mathbb{C}}}
(\tilde{\Psi}_{\varepsilon},\tilde{\Psi}_{0})|^{2}dA.
\end{equation}

We next observe that \eqref{64} holds. To see this, use that the extreme KNdS map $\Psi_0$ satisfies the Euler-Lagrange equations for the functional $\mathcal{I}_{a}$ (see Appendix \ref{sec5}), together with the fact that $\frac{d}{dt}\omega^{t}_{\varepsilon}=
\frac{d}{dt}\chi^{t}_{\varepsilon}=\frac{d}{dt}\psi^{t}_{\varepsilon}=0$ in a neighborhood of the axis $\Gamma$, to find
\begin{equation}\label{88}
\frac{d}{dt}\mathcal{I}_{a}(\Psi_{t})|_{t=0}
=\frac{1}{2}\zeta_{a}\sigma_0'(\sigma-\sigma_0)\sin\theta|_{\theta=0}^{\pi}=0.
\end{equation}
Note that justification for passing $\frac{d}{dt}$ into the integral is analogous to that in the
previous paragraph.
Now integrating \eqref{87} twice and applying the Poincar\'{e} inequality produces
\begin{equation}\label{89}
\mathcal{I}_{a}(\Psi_{\varepsilon})-\mathcal{I}_{a}(\Psi_{0})
\geq 2\int_{\mathbb{S}^{2}}|\nabla\operatorname{dist}_{\mathbb{H}^{2}_{\mathbb{C}}}
(\tilde{\Psi}_{\varepsilon},\tilde{\Psi}_{0})|^{2}dA
\geq C\int_{\mathbb{S}^{2}}\left(
\operatorname{dist}_{\mathbb{H}_{\mathbb{C}}^{2}}
(\tilde{\Psi}_{\varepsilon},\tilde{\Psi}_{0})-D_{\varepsilon}\right)^2 dA,
\end{equation}
where $D_{\varepsilon}$ is the average value of $\operatorname{dist}_{\mathbb{H}_{\mathbb{C}}^{2}}
(\tilde{\Psi}_{\varepsilon},\tilde{\Psi}_{0})$.

By Lemma \ref{lemma2} $\lim_{\varepsilon\rightarrow 0}\mathcal{I}_{a}(\Psi_{\varepsilon})=\mathcal{I}_{a}(\Psi)$, and thus in order to complete the
proof it suffices to show that the limit may be passed under the integral on the right-hand side. By the triangle inequality and some algebra, it is enough to show
\begin{equation}\label{90}
\lim_{\varepsilon\rightarrow 0}\int_{\mathbb{S}^{2}}
\operatorname{dist}_{\mathbb{H}_{\mathbb{C}}^{2}}^{2}
(\tilde{\Psi}_{\varepsilon},\tilde{\Psi})dA
=0.
\end{equation}
To see this, use the
triangle inequality and direct calculation to find
\begin{align}\label{91}
\begin{split}
\operatorname{dist}_{\mathbb{H}_{\mathbb{C}}^{2}}
(\tilde{\Psi}_{\varepsilon},\tilde{\Psi})
\leq&
\operatorname{dist}_{\mathbb{H}_{\mathbb{C}}^{2}}
((u,\omega_{\varepsilon},\chi_{\varepsilon},
\psi_{\varepsilon}),(u,\omega,\chi_{\varepsilon},
\psi_{\varepsilon}))+\operatorname{dist}_{\mathbb{H}_{\mathbb{C}}^{2}}
((u,\omega,\chi_{\varepsilon},
\psi_{\varepsilon}),(u,\omega,\chi,
\psi_{\varepsilon}))\\
&
+\operatorname{dist}_{\mathbb{H}_{\mathbb{C}}^{2}}
((u,\omega,\chi,
\psi_{\varepsilon}),(u,\omega,\chi,
\psi))\\
\leq& C\left[e^{2u}(|\omega-\omega_{\varepsilon}|
+|\psi_{\varepsilon}||\chi-\chi_{\varepsilon}|
+|\chi||\psi-\psi_{\varepsilon}|)
+e^{u}(|\chi-\chi_{\varepsilon}|
+|\psi-\psi_{\varepsilon}|)\right].
\end{split}
\end{align}
Since the right-hand side is uniformly bounded independent of $\varepsilon$, the dominated
convergence theorem applies to give \eqref{90}.\hfill\qedsymbol\medskip

\textit{Proof of Theorem \ref{thm1}.}
From the given initial data $(M,g,k,E,B)$ we obtain the four quantities $(\sigma,\omega,\chi,\psi)$ consisting of a metric component and three potentials, as explained in Section \ref{sec2}. Let $(A,\mathcal{J},Q)$ be the area, angular momentum, and charge of the horizon $S\subset M$. From Lemma \ref{scaling} there exists a corresponding triple $(\hat{A},
\hat{\mathcal{J}},\hat{Q})$ which arises from an extreme KNdS solution, and is such that the desired inequality \eqref{9} is reduced to showing $\hat{A}\geq A$. Let
\begin{equation}\label{92}
\hat{a}=\frac{\hat{\mathcal{J}}}{m(\hat{\mathcal{J}},\hat{Q})},\quad\quad\quad
\hat{\Psi}=(\hat{\sigma},\hat{\omega},\hat{\chi},\hat{\psi})
=\left(\sigma+\log\frac{\hat{A}}{A},\frac{\hat{A}^2}{A^2}\omega,\frac{\hat{A}}{A}\chi,
\frac{\hat{A}}{A}\psi\right),
\end{equation}
then Lemma \ref{lemma3} asserts that $\hat{A}\geq A$ is valid as long as
\begin{equation}\label{93}
\mathcal{I}_{\hat{a}}(\hat{\Psi})\geq\mathcal{I}_{\hat{a}}(\Psi_0),
\end{equation}
where $\Phi_0$ is the extreme KNdS map with the same angular momentum and charge $(\hat{\mathcal{J}},\hat{Q})$. Finally, observe that Theorem \ref{thm2} is applicable, since
smoothness of the initial data together with the potential formulas \eqref{24} and \eqref{27} guarantee that the asymptotics \eqref{66} hold for $\hat{\Psi}$. This establishes \eqref{93}
and completes the proof of inequality \eqref{9}.

Consider now the case of equality in \eqref{9}. From the proof of Lemma \ref{scaling}, this yields $(\hat{A},\hat{\mathcal{J}},\hat{Q})=(A,\mathcal{J},Q)$ and hence $\hat{\Psi}=\Psi$. In particular, the equality of areas implies that $\mathcal{I}_{a}(\Psi)\geq\mathcal{I}_{a}(\Psi_0)$
which gives $\Psi=\Psi_0$ from the gap bound in Theorem \ref{thm2}. Namely, the gap bound gives that $\operatorname{dist}_{\mathbb{H}_{\mathbb{C}}^2}(\tilde{\Psi},\tilde{\Psi}_0)$ is constant, but $\operatorname{dist}_{\mathbb{H}_{\mathbb{C}}^2}(\tilde{\Psi},\tilde{\Psi}_0)$=0 at the axis $\Gamma$, and hence it must vanish identically.  In light of \eqref{54}, equality of the areas also produces equality in the stability inequality \eqref{33}, when $v$ is  chosen as in \eqref{35}. It follows that on $S$
\begin{equation}\label{94}
\mu_{EM}-|J_{EM}(n)|=J_{EM}(\partial_{\theta})=J_{EM}(\partial_{\phi})=|II|=
\partial_{\phi}\log\varphi_{1}=|(\partial_{\theta}\log\varphi_1-X(\partial_{\theta}))v
-\partial_{\theta}v|=0,
\end{equation}
and
\begin{equation}\label{95}
|E|^2+|B|^2-E(n)^2-B(n)^2-2E\times B(n)=0.
\end{equation}
A computation shows that \eqref{95} implies
\begin{equation}\label{96}
E(e_2)=B(e_3),\quad\quad\quad E(e_3)=-B(e_2),
\end{equation}
and since the proofs above are invariant under the transformation $E\mapsto -E$ we must then have
\begin{equation}\label{97}
E(e_2)=E(e_3)=B(e_2)=B(e_3)=0.
\end{equation}
Furthermore, the potential formulas \eqref{24} show that $E(n)$ and $B(n)$ agree with their counterparts in the extreme KNdS solution, and thus the full electromagnetic field $(E,B)$ is that of the extreme KNdS solution on the horizon.

Lastly, from \eqref{94} we have
\begin{equation}\label{98}
k(n,\partial_{\theta})=\partial_{\theta}\log\left(\frac{\varphi_1}{v}\right)
=\partial_{\theta}\log\left(\frac{e^{\sigma_{0}/2}\varphi_1}{\sqrt{\zeta_{a}}}\right).
\end{equation}
Moreover, the potential equation \eqref{27} implies that $k(n,\partial_{\phi})$ equates with its counterpart in the extreme KNdS spacetime. All together this shows that the coefficients of the stability operator $L$ arise from the extreme KNdS data, so that the eigenfunction satisfying $L\varphi_1=0$ corresponds to the same quantity in the extreme KNdS setting. Hence $k(n,\partial_{\theta})$ agrees with its counterpart in the extreme KNdS solution. We conclude that $(S,\gamma,k(n,\cdot),E,B)$ arises from an extreme KNdS horizon.
\hfill\qedsymbol\medskip

\appendix
\numberwithin{equation}{section}

\section{The Kerr-Newman-de Sitter Spacetime}\label{sec5}

The Kerr-Newman-de Sitter black hole solves the Einstein-Maxwell equations with positive cosmological constant
\begin{equation}\label{5.1}
\tilde{R}_{ab}-\frac{1}{2}\tilde{R}\tilde{g}_{ab}+\Lambda\tilde{g}_{ab}
=8\pi T_{ab}=2\left(F_{ac}F_{b}^{\phantom{b}c}-\frac{1}{4}|F|^2\tilde{g}_{ab}\right),
\end{equation}
\begin{equation}\label{5.2}
dF=0,\quad\quad\quad d\star_{4} F=0.
\end{equation}
In Boyer-Lindquist-like coordinates the KNdS metric \cite{CaldarelliCognolaKlemm} is given by
\begin{equation}\label{5.3}
\tilde{g}=-\frac{\Delta_r}{\Sigma}\left(dt-\frac{a\sin^2\theta}{\Xi}d\phi\right)^2
+\frac{\Sigma}{\Delta_r}dr^2+\frac{\Sigma}{\Delta_{\theta}}d\theta^2
+\frac{\Delta_{\theta}\sin^2\theta}{\Sigma}\left(adt-\frac{r^2+a^2}{\Xi}d\phi\right)^2,
\end{equation}
where
\begin{equation}\label{5.4}
\Delta_r=(r^2+a^2)\left(1-\frac{r^2 \Lambda}{3}\right)-2mr+q^2,\quad\quad\quad
\Xi=1+\frac{a^2 \Lambda}{3},
\end{equation}
\begin{equation}\label{5.5}
\Delta_{\theta}=1+\frac{a^2 \Lambda}{3}\cos^2\theta,\quad\quad\quad
\Sigma=r^2+a^2 \cos^2\theta,
\end{equation}
and the field strength and vector potential ($F=d\mathcal{A}$) take the form
\begin{equation}\label{5.6}
\mathcal{A}=\frac{q_e r}{\sqrt{\Delta_r \Sigma}}e^0 - \frac{q_b \cos\theta}
{\sqrt{\Delta_{\theta}\Sigma}\sin\theta} e^3,
\end{equation}
\begin{equation}\label{5.7}
F=\frac{1}{\Sigma^2}\left[\left(q_e(r^2-a^2 \cos^2\theta)-2q_b r a\cos\theta\right) e^0 \wedge e^1
\left(q_b(r^2 -a^2\cos^2\theta)+2q_e r a\cos\theta\right)e^2\wedge e^3\right].
\end{equation}
Here the following orthonormal coframe is used
\begin{equation}\label{5.8}
e^0=\sqrt{\frac{\Delta_r}{\Sigma}}\left(dt-\frac{a\sin^2\theta}{\Xi}d\phi\right),\quad\quad
e^1=\sqrt{\frac{\Sigma}{\Delta_r}}dr,
\end{equation}
\begin{equation}\label{5.8.1}
e^2=\sqrt{\frac{\Sigma}{\Delta_{\theta}}}d\theta,\quad\quad
e^3=\sqrt{\frac{\Delta_{\theta}}{\Sigma}}\sin\theta\left(adt-\frac{r^2+a^2}{\Xi}d\phi\right),
\end{equation}
and the parameters $m$, $a$, and $q=\sqrt{q_e^2+q_b^2}$ define the mass, angular momentum, and charge through the formulas
\begin{equation}\label{5.9}
\mathcal{M}=\frac{m}{\Xi^2},\quad\quad\mathcal{J}=\frac{am}{\Xi^2},\quad\quad
Q_e=\frac{q_e}{\Xi},\quad\quad Q_b=\frac{q_b}{\Xi}.
\end{equation}

The geometry of the KNdS solution depends crucially on the zeros of the polynomial $\Delta_r$. In order to avoid naked singularities and other undesirable features, the relevant parameters must satisfy certain restrictions. If
\begin{equation}\label{5.10.0}
a^2 \Lambda<3\quad\quad\text{ and }\quad\quad m_{crit}^{-}\leq m\leq m_{crit}^{+},
\end{equation}
where $m_{crit}^{\pm}$ are the two positive solutions of the equation
\begin{align}\label{5.10}
\begin{split}
0=&m^4+\frac{(a^2\Lambda-3)\left((a^2\Lambda-3)^2+108\Lambda(a^2+q^2)\right)}{3^5\Lambda}m^2\\
&+\frac{(a^2+q^2)\left((a^2\Lambda-3)^2+12\Lambda(a^2+q^2)\right)^2}{3^6\Lambda},
\end{split}
\end{align}
then $\Delta_r$ has four real roots $r_{--}<r_{-}\leq r_{+}\leq r_{c}$, one of which is simple and negative with the rest positive. The roots $r_{-}$ and $r_{+}$ represent inner and outer event horizons, while the root $r_{c}$ corresponds to a de Sitter cosmological horizon. An extremal black hole occurs when at least two of the three positive roots coincide, and in this situation the geometry near the horizon becomes asymptotically cylindrical. In particular, if $m=m_{crit}^{-}$ then $r_{-}=r_{+}$ and if $m=m_{crit}^{+}$ then $r_{+}=r_{c}$.

We now derive the quasi-harmonic map $\Psi_0=(\sigma_0,\omega_0,\chi_0,\psi_0)$ associated with
an extreme KNdS solution. At an extreme horizon $\Delta_r=0$ and thus the induced metric is given by
\begin{equation}\label{5.11}
\gamma_0=\frac{\Sigma}{\Delta_{\theta}}d\theta^2+\frac{\Delta_{\theta}(r_{+}^2+a^2)^2\sin^2\theta}
{\Sigma\Xi^2}d\phi^2.
\end{equation}
This easily fits into the canonical form \eqref{23} by setting
\begin{equation}\label{5.12}
\sigma_0=\log\frac{\Delta_{\theta}(r_{+}^2+a^2)}{\Xi^2\Sigma},\quad\quad\quad
\frac{A}{4\pi}=e^c=\frac{r_{+}^2+a^2}{\Xi}.
\end{equation}
To find the electromagnetic potentials recall from \eqref{24} that
\begin{equation}\label{5.12.1}
\chi_{0}'=E(e_1)e^c \sin\theta,\quad\quad\quad \psi_{0}'=B(e_1)e^c \sin\theta,
\end{equation}
where $\{e_i\}$, $i=0,1,2,3$ form the frame dual to \eqref{5.8}, \eqref{5.8.1}.
Moreover
\begin{equation}\label{5.13}
E=\iota_{e_{0}}F=E(e_1)e^1,\quad\quad\quad B=\iota_{e_{0}}\star_{4}F=B(e_1)e^1,
\end{equation}
with
\begin{equation}\label{5.14}
E(e_1)=\frac{1}{\Sigma^2}\left[q_e(r_{+}^2-a^2\cos^2\theta)-2q_b r_{+}a\cos\theta\right],
\end{equation}
\begin{equation}\label{5.14.1}
B(e_1)=\frac{1}{\Sigma^2}\left[q_b(r_{+}^2-a^2\cos^2\theta)+2q_e r_{+}a\cos\theta\right].
\end{equation}
It then follows that
\begin{equation}\label{5.15}
\chi_0=-\frac{1}{\Xi\Sigma}\left[q_e (r_{+}^2+a^2)\cos\theta+q_b r_{+} a\sin^2\theta\right],
\end{equation}
\begin{equation}\label{5.15.1}
\psi_0=-\frac{1}{\Xi\Sigma}\left[q_b (r_{+}^2+a^2)\cos\theta-q_e r_{+} a\sin^2\theta\right].
\end{equation}

In order to find the charged twist potential recall from \eqref{27} that
\begin{equation}\label{5.16}
\omega_0'=k(e_1,\partial_{\phi})e^c \sin\theta-\chi_0\psi_0'+\psi_0\chi_0',
\end{equation}
where the second fundamental form may be expressed as
\begin{equation}\label{5.17}
k(e_1,\partial_{\phi})=\langle \partial_{\phi},\nabla_{e_1}N\rangle
=-\langle \nabla_{e_1}\partial_{\phi},N\rangle=-e^0(\nabla_{e_1}\partial_{\phi})
\end{equation}
in which $N$ is the unit normal to the $t=0$ slice. A computation with Christoffel symbols then yields
\begin{equation}\label{5.18}
e^0(\nabla_{e_1}\partial_{\phi})=\frac{\Delta_{r}}{\Sigma}\left(\Gamma_{r\phi}^{t}
-\frac{a\sin^2\theta}{\Xi}\Gamma_{r\phi}^{\phi}\right)
=-\frac{ar_{+}(r_{+}^2+a^2)\Delta_{\theta}\sin^2\theta}{\Xi}.
\end{equation}
Furthermore
\begin{equation}\label{5.19}
-\chi_0\chi_0'+\psi_0\chi_0'
=\frac{q^2 a r_{+}(r_{+}^2+a^2)}{\Xi^2\Sigma^2}\sin\theta(1+\cos^2\theta),
\end{equation}
and since $\Delta_r=0$ the following relation holds
\begin{equation}\label{5.20}
(r_{+}^2+a^2)\Delta_{\theta}=(r_{+}^2+a^2)\left(1-\frac{r_{+}^2\Lambda}{3}\right)
+\frac{\Lambda}{3} (r_{+}^2+a^2)\Sigma=2mr_{+}-q^2+\frac{\Lambda}{3} (r_{+}^2+a^2)\Sigma.
\end{equation}
By combining \eqref{5.18}, \eqref{5.19}, and \eqref{5.20} we arrive at
\begin{equation}\label{5.21}
\omega_{0}'
=\frac{ar_{+}(r_{+}^2+a^2)}{\Xi^2\Sigma^2}
\left[\left(2mr_{+}+\frac{\Lambda}{3}(r_{+}^2+a^2)\Sigma\right)\sin^2\theta
+2q^2\cos^2\theta\right].
\end{equation}
Integration then produces
\begin{align}\label{5.22}
\begin{split}
\omega_0=&\frac{ar_{+}(r_{+}^2+a^2)}{\Xi^2}\left[\frac{\cos\theta}{\Sigma}
\left(-\frac{m(r_{+}^2+a^2)}{a^2 r_{+}}+\frac{q^2}{a^2}
+\frac{\Lambda(r_{+}^2+a^2)}{3a^2}\Sigma\right)\right.\\
&+\left.\arctan\left(\frac{a\cos\theta}{r_{+}}\right)
\left(\frac{m(r_{+}^2-a^2)}{a^3 r_{+}^2}
-\frac{q^2}{a^3 r_{+}}-\frac{\Lambda(r_{+}^2+a^2)^2}{3 a^3 r_{+}}\right)\right].
\end{split}
\end{align}
Note that in the case of an extreme Kerr-Newman horizon, that is $\Lambda=0$, this simplifies so that the second line vanishes and the entire expression reduces to the formula in
\cite[Lemma 3.4]{ClementJaramilloReiris}. Having constructed the extreme KNdS map $\Psi_0$, it may be verified that the values of the angular momentum and charge as given by \eqref{25}, \eqref{26}, and \eqref{28} coincide with those given in \eqref{5.9}.

The map $\Psi_0$ satisfies the Euler-Lagrange equations for the functional $\mathcal{I}_{a}$ in \eqref{47}:
\begin{align}\label{5.23}
\begin{split}
\frac{1}{\sin\theta}\left(\zeta_{a}\sin\theta\sigma'\right)'=&-2\zeta_{a}\left(\frac{2e^{-2\sigma}}{\sin^4\theta}
(\omega'+\chi\psi'-\psi\chi')^2+\frac{e^{-\sigma}}{\sin^2\theta}(\chi'^2+\psi'^2)\right)\\
&+2(1+a^2 \Lambda\cos^2\theta)-2\Lambda\zeta_a\left(\frac{A}{4\pi}\right)^2 e^{-\sigma},
\end{split}
\end{align}
\begin{equation}\label{5.24}
\left(\zeta_a \frac{e^{-2\sigma}}{\sin^4\theta}(\omega'+\chi\psi'-\psi\chi')\right)'=0,
\end{equation}
\begin{equation}\label{5.25}
\frac{1}{\sin\theta}\left(\zeta_{a}\frac{e^{-\sigma}}{\sin\theta}\chi'\right)'
-\frac{2\zeta_a e^{-2\sigma}}{\sin^4\theta}(\omega'+\chi\psi'-\psi\chi')\psi'=0,
\end{equation}
\begin{equation}\label{5.26}
\frac{1}{\sin\theta}\left(\zeta_{a}\frac{e^{-\sigma}}{\sin\theta}\psi'\right)'
+\frac{2\zeta_a e^{-2\sigma}}{\sin^4\theta}(\omega'+\chi\psi'-\psi\chi')\chi'=0.
\end{equation}
In order to elucidate the quasi-harmonic map structure of the equations, let 
$u_{0}=-\sigma_{0}/2-\log\sin\theta$ and $\tilde{\Psi}_{0}=(u_0,\omega_0,\chi_0,\psi_0)$. Then
$\tilde{\Psi}_{0}$ satisfies the Euler-Lagrange equations for the functional $E$ in
\eqref{57}:
\begin{equation}\label{5.27}
\frac{1}{\sin\theta}\left(\zeta_a \sin\theta u'\right)'=2\zeta_a e^{4u}(\omega'+\chi\psi'+\psi\chi')^2
+\zeta_a e^{2u}\left[(\chi'^2+\psi'^2)+\Lambda\left(\frac{A}{4\pi}\right)^2 \sin^2\theta\right],
\end{equation}
\begin{equation}\label{5.28}
\left(\zeta_a \sin\theta e^{4u}(\omega'+\chi\psi'-\psi\chi')\right)'=0,
\end{equation}
\begin{equation}\label{5.29}
\frac{1}{\sin\theta}\left(\zeta_{a}\sin\theta e^{2u}\chi'\right)'
-2\zeta_a e^{4u}(\omega'+\chi\psi'-\psi\chi')\psi'=0,
\end{equation}
\begin{equation}\label{5.30}
\frac{1}{\sin\theta}\left(\zeta_{a}\sin\theta e^{2u}\psi'\right)'
+2\zeta_a e^{4u}(\omega'+\chi\psi'-\psi\chi')\chi'=0.
\end{equation}
These clearly reduce to the axisymmetric harmonic map equations for   $\mathbb{S}^2\rightarrow\mathbb{H}_{\mathbb{C}}^2$ when $\Lambda=0$.

We now indicate the derivation of the main inequality \eqref{9}. There are two methods
for doing this. The first consists of algebraic manipulations centered on the two equations
$\Delta_r=0$ and $\partial_{r}\Delta_{r}=0$, which is carried out in \cite{ClementReirisSimon}
for the uncharged case. The second method is motivated by black hole thermodynamics. Consider the
Smarr formula \cite{DehghaniKhajehAzad} for the mass of a (not necessarily extreme) KNdS solution
\begin{equation}\label{5.31}
\mathcal{M}^2=\frac{A}{16\pi}
+\frac{\pi}{A}\left(4\mathcal{J}^2+Q^2\right)
+\frac{Q^2}{2}-\frac{\Lambda\mathcal{J}^2}{3}
-\frac{\Lambda A}{24\pi}\left(Q^2+\frac{A}{4\pi}-\frac{\Lambda A^2}{96\pi^2}\right).
\end{equation}
The temperature is computed by varying the mass function with respect to the entropy $\mathcal{S}=A/4$, and is given by
\begin{equation}\label{5.32}
T:=\frac{\partial\mathcal{M}}{\partial\mathcal{S}}
=\frac{1}{8\pi\mathcal{M}}\left[1-\frac{16\pi^2}{A^2}\left(4\mathcal{J}^2+Q^2\right)
-\frac{2\Lambda}{3}\left(Q^2+\frac{A}{2\pi}\right)+\frac{\Lambda^2 A^2}{48\pi^2}\right].
\end{equation}
Moreover, the first law of black hole thermodynamics states that
\begin{equation}\label{5.33}
d\mathcal{M}=Td\mathcal{S}+\Omega d\mathcal{J}+\Phi dQ,
\end{equation}
where $\Omega$ and $\Phi$ denote the angular velocity and electric potential respectively, and have the expressions
\begin{equation}\label{5.34}
\Omega=\frac{\partial\mathcal{M}}{\partial\mathcal{J}}=\frac{\pi\mathcal{J}}{\mathcal{M}\mathcal{S}}
\left(1-\frac{\Lambda\mathcal{S}}{3\pi}\right),\quad\quad\quad
\Phi=\frac{\partial\mathcal{M}}{\partial Q}=\frac{\pi Q}{2\mathcal{M}\mathcal{S}}
\left(Q^2+\frac{\mathcal{S}}{\pi}-\frac{\Lambda\mathcal{S}^2}{3\pi^2}\right).
\end{equation}
In the dynamical setting, Hawking's area theorem (second law of black hole thermodynamics) asserts that $\dot{\mathcal{S}}\geq 0$ where dot represents a time derivative. Thus, assuming conservation of black hole angular momentum and charge, and the physically reasonable supposition that black hole mass increases with area, then the first law \eqref{5.33} implies that the temperature is nonnegative. In conclusion, heuristic physical reasoning leads to $T\geq 0$ which is equivalent
to \eqref{9}.

Lastly, we mention that given a triple $(A,\mathcal{J},Q)$ which saturates \eqref{9} (with $A\leq 4\pi/\Lambda$) as in Lemma \ref{scaling}, we may insert these values into the Smarr formula to obtain the mass as a function of $\mathcal{J}$ and $Q$. This gives the angular momentum parameter as a function of the same quantities via the formula
\begin{equation}\label{5.35}
a=\frac{\mathcal{J}}{\mathcal{M}(\mathcal{J},Q)}.
\end{equation}
From here all the remaining parameters $m$, $q_e$, and $q_b$ of \eqref{5.9} may be determined in terms of $\mathcal{J}$ and $Q$, and we may construct a KNdS spacetime. This solution must be extreme, since according to \eqref{5.32} saturation of \eqref{9} implies that $\mathcal{M}$ (as a function of $A$) has a critical point at the given triple $(A,\mathcal{J},Q)$, and hence $m$ achieves one of the extreme values $m_{crit}^{\pm}$; a calculation shows that $a(\mathcal{J},Q)$ satisfies \eqref{5.10.0}.

\section{Canonical Coordinates}\label{sec6}

The purpose of this appendix is to establish the existence of the coordinate system introduced in Section \ref{sec2} which yields the simple expression for the horizon metric \eqref{23}. As previously mentioned, arguments for the existence of such a coordinate system have been given previously \cite{AshtekarEnglePawlowskiBroeck,DainReiris} under certain hypotheses. Here we provide a detailed proof in the context appropriate for the current results.

\begin{prop}
Let $\gamma$ be a smooth axisymmetric Riemannian metric on a topological 2-sphere $S$, and denote the associated Killing field by $\eta$. Then there exist global coordinates $(\theta,\phi)$ with $\theta\in[0,\pi]$ and $\phi\in[0,2\pi)$, such that $\eta=\partial_{\phi}$ and the metric takes the form
\begin{equation}\label{6.2}
\gamma=e^{2c-\sigma}d\theta^2+e^{\sigma}\sin^{2}\theta d\phi^2,
\end{equation}
where $\sigma\in C^{\infty}(S)$ depends only on $\theta$ and $c$ is a constant related to the area by $A=4\pi e^{c}$.
\end{prop}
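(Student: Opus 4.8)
The plan is to use the isometric $U(1)$-action generated by $\eta$ to pass to orbit-adapted coordinates, and then to reparametrize the transverse coordinate so that the area element acquires the prescribed $\sin\theta$ weight. First I would settle the topology of the action. A nontrivial Killing field on a closed surface has isolated zeros, and by the Poincar\'e--Hopf theorem the indices of these zeros sum to $\chi(S)=2$; since the linearization of $\eta$ at a zero is an infinitesimal rotation, each zero has index $+1$, so there are exactly two fixed points $p_N,p_S$. On $S\setminus\{p_N,p_S\}$ the action is free with circular orbits of affine length $2\pi$, and the orbit space is an interval.

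Next I would produce diagonal orbit-adapted coordinates on the interior. In dimension two the line field orthogonal to $\eta$ is automatically integrable; its integral curves are meridians running between the two poles and are orthogonal to the orbits. Taking an $\eta$-invariant parameter $\rho$ along the meridians together with the flow parameter $\phi\in[0,2\pi)$ of $\eta$, orthogonality and the Killing property (which forces all components to depend on $\rho$ alone) give $\eta=\partial_\phi$ and a diagonal metric $\gamma=a(\rho)\,d\rho^2+b(\rho)\,d\phi^2$ with $b>0$ on the interior.

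I would then fix $\theta$ by prescribing the area element. Writing $e^{c}:=A/(4\pi)$ and defining $\theta$ through $\cos\theta=1-e^{-c}\int_0^{\rho}\sqrt{ab}\,d\rho'$, so that $\theta$ increases from $0$ at $p_N$ to $\pi$ at $p_S$, one gets $\sqrt{ab}\,d\rho=e^{c}\sin\theta\,d\theta$ and hence $\sqrt{\gamma_{\theta\theta}\gamma_{\phi\phi}}=e^{c}\sin\theta$. Setting $\sigma:=\log(\gamma_{\phi\phi}/\sin^2\theta)$ then yields $\gamma_{\phi\phi}=e^{\sigma}\sin^2\theta$ and $\gamma_{\theta\theta}=e^{2c-\sigma}$, which is precisely \eqref{6.2}; here $\sigma$ depends on $\theta$ only because every step of the construction is $\eta$-invariant.

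The hard part is regularity at the two poles: I must check that $\theta$ is a genuine smooth coordinate there and that $\sigma$ extends smoothly to all of $S$. Near a fixed point I would linearize $\eta$, so that in normal coordinates the action is the standard rotation and $\gamma$ is smooth; smoothness of $\gamma$ at the fixed point forces the circumference to vanish linearly, $\sqrt{b}\sim r$ at geodesic distance $r$ from the pole. This gives $\int_0^{\rho}\sqrt{ab}\,d\rho'\sim \tfrac12 r^2$, whence $1-\cos\theta\sim r^2$, $\theta\sim r$, and $\sin\theta\sim r\sim\sqrt{b}$. Consequently $\theta$ is a smooth coordinate, $\gamma_{\phi\phi}/\sin^2\theta$ has a finite positive limit, and $\sigma\in C^\infty(S)$; the same computation reproduces the no-conical-singularity normalization $\sigma(0)=\sigma(\pi)=c$ of \eqref{23.1}.
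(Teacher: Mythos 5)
Your proof is correct and follows essentially the same route as the paper's: exactly two fixed points, orthogonal orbit-adapted coordinates $\gamma=a\,d\rho^2+b\,d\phi^2$, and a reparametrization of the transverse coordinate chosen so that the area density becomes $e^{c}\sin\theta$ (your integral $\int_0^{\rho}\sqrt{ab}\,d\rho'$ is precisely the paper's potential $f$ with $df=\iota_{\eta}\varepsilon^{(2)}$, and your substitution for $\cos\theta$ is identical to the paper's). The only differences are cosmetic --- Poincar\'e--Hopf in place of the fixed-point-set theorems the paper cites for counting the zeros of $\eta$, plus an explicit check of regularity of $\theta$ and $\sigma$ at the poles, a point the paper's proof leaves implicit.
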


\begin{proof}
Let $\{S_{i}\}\subset S$ denote the components of the zero set of $\eta$. Then each $S_i$ is of even codimension \cite[Theorem 34]{Petersen}, and is therefore a point. Moreover, according to
\cite[Theorem 40]{Petersen} we have the following relation between Euler characteristics
$2=\chi(S)=\sum_{i}\chi(S_{i})$, and hence the zero set of $\eta$ consists of exactly two points, a north and south pole $\{p_{+},p_{-}\}$.

Let $\alpha(t)$, $t\in[0,l]$ be a minimizing geodesic parameterized by arclength connecting the south to north pole, that is $\alpha(0)=p_-$, $\alpha(l)=p_+$. Let $\Phi_s$, $s\in[0,2\pi)$ denote the 1-parameter flamily of isometries associated to $\eta$, and set $\alpha_{s}(t)=\Phi_s(\alpha(t))$. Then for each $s$ the curve $t\mapsto\alpha_{s}(t)$ is a geodesic, and $\alpha_{0}(t)=\alpha_{2\pi}(t)=\alpha(t)$. By construction $(t,s)$ forms a system of global coordinates on $S\setminus\alpha$, and $\gamma(\partial_{t},\partial_{t})=1$ as well as
$\gamma(\partial_{s},\partial_{s})=|\eta|^2$. In addition, the geodesic and Killing equations imply that
\begin{equation}\label{6.3}
\partial_{t}\gamma(\partial_{t},\eta)=\gamma(\partial_{t},\nabla_{\partial_{t}}\eta)=0.
\end{equation}
Thus, since $\gamma(\partial_{t},\eta)=0$ at $p_\pm$ we must have $\gamma(\partial_{t},\partial_{s})=\gamma(\partial_{t},\eta)=0$ everywhere. It follows that
\begin{equation}\label{6.4}
\gamma=dt^2+|\eta|^2 ds^2.
\end{equation}

In order to put the metric in the form \eqref{6.2} we will make use of a potential for the volume form $\varepsilon^{(2)}$. By Cartan's formula and the fact that $\eta$ is a Killling field
\begin{equation}\label{6.5}
0=\mathfrak{L}_{\eta}\varepsilon^{(2)}=d\iota_{\eta}\varepsilon^{(2)}
+\iota_{\eta}d\varepsilon^{(2)}=d\iota_{\eta}\varepsilon^{(2)}.
\end{equation}
Since $S$ is simply connected there exists a function $f$ such that $df=\iota_{\eta}\varepsilon^{(2)}$. Note that $\eta(f)=\varepsilon^{(2)}(\eta,\eta)=0$ so that
$f$ is a function of $t$ alone. Moreover, since $|df|$ vanishes only at the north and south poles, it may be assumed that $f$ is strictly increasing. Therefore a new coordinate system may be defined by
$\tilde{\theta}=f(t)$ and $\phi=s$. Observe that
\begin{equation}\label{6.6}
f'^2=|\nabla f|^2=\varepsilon^{(2)}(\nabla f,\eta)=f'\varepsilon^{(2)}(\partial_{t},\partial_{s})
=f'|\eta|,
\end{equation}
and hence $f'=|\eta|$. It follows that
\begin{equation}\label{6.7}
\gamma=|\eta|^{-2}d\tilde{\theta}^2+|\eta|^2 d\phi^2.
\end{equation}
Now set
\begin{equation}\label{6.8}
\cos\theta=1-\frac{2(\tilde{\theta}-f(0))}{f(l)-f(0)},
\end{equation}
then
\begin{equation}\label{6.9}
\gamma=\frac{(f(l)-f(0))^2\sin^2\theta}{4|\eta|^2}d\theta^2+|\eta|^2 d\phi^2.
\end{equation}
Finally, defining $e^{-\sigma}=|\eta|^{-2}\sin^2\theta$ and $e^{c}=(f(l)-f(0))/2$ produces the desired result.
\end{proof}

\end{document}